\theoremstyle{plain}
\newtheorem{lemma}{Lemma}
\newtheorem{proposition}{Proposition}
\newtheorem{definition}{Definition}
\newtheorem{theorem}{Theorem}
\newtheorem{corollary}{Corollary}
\theoremstyle{definition}
\renewcommand{\r}[1]{\text{rank$(#1)$}}
\newcommand{\rs}[3]{\text{${#1}^{(#2)}_{#3}$}}
\begin{document}
\title{Repair Locality From a Combinatorial Perspective}

\author{\IEEEauthorblockN{Anyu~Wang and
        Zhifang~Zhang}

\IEEEauthorblockA{Key
Laboratory of Mathematics Mechanization, NCMIS\\
Academy of Mathematics and Systems Science, Chinese Academy of Sciences\\
Beijing, 100190\\
Email: wanganyu@amss.ac.cn,~ zfz@amss.ac.cn}
}
\maketitle
\thispagestyle{empty}

\begin{abstract}

Repair locality is a desirable property for erasure codes in distributed storage systems.
Recently, different structures of local repair groups have been proposed in the definitions of repair locality. In
this paper, the concept of \emph{regenerating set} is introduced to characterize the local repair groups.
A definition of locality $r^{(\delta -1)}$ (i.e., locality $r$ with repair tolerance $\delta -1$) under the most general structure of regenerating sets is given.
All previously studied locality turns out to be special cases of this definition.
Furthermore, three representative concepts of locality proposed before are reinvestigated under the framework of regenerating sets, and their respective upper bounds on the minimum distance are reproved in a uniform and brief form.
Additionally, a more precise distance bound is derived for the \emph{square code} which is a class of linear codes with locality $r^{(2)}$ and high information rate, and  an explicit code construction attaining the optimal distance bound is obtained.
\end{abstract}

\section{Introduction}
In modern large-scale storage systems, erasure codes can afford higher data reliability with considerably smaller storage overhead \cite{erasure_vs_replication}.
An important issue in utilizing erasure codes is data repair in case of node failures so that the whole storage system keeps the same level of redundancy.
Nevertheless how to reduce the repair cost becomes a key problem that affects the practical applications of erasure codes.
There are several cost metrics that can be optimized during the repair process: the repair bandwidth \cite{Network_coding_for_distributed_storage_systems_Dimakis}, i.e., the total number of bits communicated in the network, the number of bits read from existing disks \cite{Optimal_Access_Tamo}, and the repair locality \cite{On_the_locality_of codeword_symbols_Huang, IO_optimal_Huang}, i.e., the number of nodes that participate in the repair process.
Each of these metrics is relevant to different application environments.
For cloud storage applications, the main performance bottleneck is the disk I/O \cite{IO_optimal_Huang}, which is proportional to the number of nodes connected during the repair process.

A related performance metric is {\it repair locality} which was first introduced for linear scalar codes \cite{On_the_locality_of codeword_symbols_Huang,Oggier_Self}. Specifically,  a coordinate of a linear scalar code has locality $r$ if the value at this coordinate can be recovered by a linear combination of the values at $r$ other coordinates. We say these $r$ coordinates form a {\it local repair group} of the former coordinate.
Then in \cite{Loc_repair_codes} the locality $r$ was generalized to vector and nonlinear codes while the structure of local repair groups remained unchanged.
Later, the structure of error-correcting codes was adopted in local repair groups,  which gave the definition of locality $(r,\delta)$ in \cite{r_delta_Prakash2012}. Since the local repair group provides a subcode with minimum distance $\delta$, the locality $(r,\delta)$ can tolerate up to $\delta-1$ erasures, which means even when $\delta-1$ nodes fail in the system, each failed node can still be repaired by accessing $r$ existing nodes.
This locality was also generalized to vector and nonlinear codes in \cite{RankMetric}. Recently, another kind of locality with $(\delta-1)$-erasure talerance has been proposed as the $(r,\delta)_c$-locality \cite{r_delta_c_2013}, where the local repair group consists of $\delta-1$ disjoint subsets. This new structure of local repair groups leads to an improvement in the minimum distance.
With all these definitions of locality, the upper bounds on the minimum distance were derived respectively, and codes attaining the upper bounds were constructed.

In this paper, we introduce the concept of {\it regenerating set} to characterize the local repair groups.
Under the framework of regenerating sets, we develop a uniform approach to analyze the minimum code distance for different kinds of locality.
Specifically, a connection between the minimum distance and the regenerating set structure is established for any code, so the problem of estimating the code distance is transformed into calculating the size of  unions of regenerating sets, and the latter is a simple combinatorial problem.
In detail, this paper includes three contributions that benefit from the framework of regenerating sets.
\begin{itemize}
\item[(1)] {\it The most general definition.}
We define the locality $r^{(\delta-1)}$ to describe the locality $r$ along with repair tolerance $\delta-1$ under a general structure of local repair groups.
The definition applies to both linear and nonlinear codes. All previously studied locality are actually  special cases of this definition.
\item[(2)] {\it Uniform and brief proofs.}
We reinvestigate three representative families of codes  with different locality proposed before, and reprove the upper bounds of the minimum distance in a combinatorial way.
The proofs present an uniform and brief form.
\item[(3)] {\it Precise bound.}
We derive an upper bound on the minimum distance for a class of specific codes. This bound turns out to be more precise than that given before \cite{r_delta_c_2013}.
Moreover, we present an explicit code construction that attains this upper bound.
\end{itemize}

\subsection{Related Work}

\begin{table*}[t]
\centering
\caption{}\label{tabComp}
\renewcommand{\arraystretch}{1.3}
\begin{tabular}{|c|c|c|c|c|}
\hline
locality & $\alpha$ & linear / nonlinear & $\delta$  & local repair group \\ \hline
\cite{On_the_locality_of codeword_symbols_Huang}: locality $r$ & $\alpha =1$ & linear & $\delta =2$ & single subset  \\ \hline
\cite{Loc_repair_codes}: $C(n,r,d,\alpha)$ codes & $\alpha \ge 1$ & both & $\delta = 2$ & single subset \\ \hline
\cite{r_delta_Prakash2012}: locality $(r,\delta)$  & $\alpha = 1$  & linear & $\delta \ge 2$ & error correcting codes \\ \hline
\cite{RankMetric}: $(r,\delta,\alpha)$ codes & $\alpha \ge 1$  & both & $\delta \ge 2$ & error correcting codes  \\ \hline
\cite{r_delta_c_2013}: $(r,\delta)_c$-locality & $ \alpha = 1 $ & linear & $\delta \ge 2$ & disjoint repair sets \\ \hline
\cite{LRC_Alternatives_Oggier2013}: repair tolerance $\delta(i)$ & $\alpha = 1$ & linear & $\delta \ge 2$ & general \\ \hline
this paper: locality $r^{\delta-1}$ & $\alpha \ge 1$ & both & $\delta \ge 2$ & general \\ \hline
\end{tabular}
\end{table*}

As we have stated, previously proposed locality all fall into the scope of our locality $r^{(\delta-1)}$.
Table \ref{tabComp} gives a comparison of different definitions of locality, where $\alpha$ stands for the size of each coded fragment, namely, $\alpha=1$ means the locality only applies to scalar codes while $\alpha \ge 1$ means it also applies to vector codes, and $\delta$ denotes the repair tolerance.

The framework of regenerating sets proposed in this paper extends the matroid approach used in \cite{LRC_Matroid} to the vector case and nonlinear case.
Particularly, it is sufficient for paper \cite{LRC_Matroid} to study circuits in linear matroids because only linear scalar codes were concerned there.
However, because of generalization of the locality $r^{(\delta-1)}$ in this paper we alternatively define the regenerating set to characterize local repair groups, and develop effective approaches accordingly to prove the code distance bound.

\subsection{Organization}
Section II introduces the concept of regenerating set and shows its connection with the minimum distance.
Section III gives the definition of locality $r^{(\delta-1)}$ and reproves the upper bounds of code distance for three kinds of locality proposed before.
Section IV derives an upper code distance bound for the square codes and an explicit  construction attaining this bound. Section V concludes the paper.

\section{Regenerating Sets and The Minimum Distance}\label{secRegSet}

Let $G$ be an encoding function that takes input a file of size $M$ over an alphabet $\Sigma$ and outputs $n$ coded fragments of size $\alpha$ over $\Sigma$, that is,
\begin{equation*}
G(X) = (Y_1,\cdots,Y_n),
\end{equation*}
where $X \in \Sigma^M$ and $Y_i \in \Sigma^\alpha$ for $i = 1,\cdots,n$.
Note that $X$ can be viewed as a random variable which is uniformly drawn from $\Sigma^M$ and $Y_1,\cdots,Y_n$ are random variables over $\Sigma^\alpha$.
Namely, $H(X) = M$ and $H(Y_i) \le \alpha$, where $H(\cdot)$ is the $|\Sigma|$-ary entropy function.
For convenience, we denote the code determined by the encoding function $G$ as an $(n,(M,\alpha),d)$ code $\mathcal{C}$, where $d$ is the minimum distance defined as follows:

\begin{definition}\label{defMinDist}
The minimum distance of $\mathcal{C}$ is defined as
\begin{equation*}
d = n - \max \{ | E | : E \subseteq [n] \text{ and } H(Y_E) < M\},
\end{equation*}
where $[n]$ denotes the set of integers $\{1,2,\cdots,n\}$ and $Y_E$ is the set of random variables $\{Y_i\}_{i \in E}$.
\end{definition}

It follows from the definition that any $n-d+1$ of the variables $Y_1,Y_2,\cdots,Y_n$ have joint entropy $M$, and therefore the $(n,(M,\alpha),d)$ code $\mathcal{C}$ can tolerate up to $d -1$ erasures.
To ensure the repair of all coordinates, we assume $d \ge 2$ throughout the paper.
A trivial result is that $H(Y_1,\cdots,Y_n) = M$.

\subsection{Regenerating Sets}
Now we define the regenerating set with respect to an $(n,(M,\alpha),d)$ code $\mathcal{C}$.
\begin{definition}\label{defRegSet}
  For any $i\in [n]$, a regenerating set of the $i$-th coordinate is a subset $R\subseteq [n]$ satisfying $i\in R$ and $H(Y_i\mid Y_{R\setminus \{i\}})=0$.
\end{definition}
It can be seen that any coordinate of $\mathcal{C}$ has at least one regenerating set when the minimum distance $d\geq2$.
Moreover, if $R$ is a regenerating set of the $i$-th coordinate, then any set $R'$ satisfying $R\subseteq R'\subseteq [n]$ is also a regenerating set of the $i$-th coordinate.
We denote the collection of all regenerating sets of the $i$-th coordinate as $\mathcal{R}_i$.

\begin{definition}
A sequence of regenerating sets $R_1, R_2,...,R_m$, where $R_i\in\mathcal{R}_{l_i}$ for $1\leq i\leq m$ and $l_i\in[n]$, is said to have a nontrivial union if $l_j\not\in\cup_{i=1}^{j-1}R_i$ for $1\leq j\leq m$.
\end{definition}

The structure of nontrivial union plays an important role in estimating the minimum distance of a code.
The following proposition gives an upper bound on the entropy of a nontrivial union of regenerating sets in terms of its set size.
\begin{proposition}\label{propEntSiz}
  Suppose a sequence of regenerating sets $R_1, R_2,...,R_m$ has a nontrivial union, where $R_i\in\mathcal{R}_{l_i}$ and $l_i\in[n]$ for $1\leq i\leq m$. Then $H(Y_{\cup_{i=1}^mR_i})\leq \alpha(|\cup_{i=1}^mR_i|-m)$.
\end{proposition}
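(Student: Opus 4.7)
The plan is to exploit the nontriviality of the union to isolate a set of $m$ distinguished coordinates inside $\bigcup_{i=1}^m R_i$ that contribute zero marginal entropy, and bound the remainder by the per-symbol entropy cap of $\alpha$. First, I would observe that the condition $l_j \notin \bigcup_{i=1}^{j-1} R_i$ forces $l_1, \ldots, l_m$ to be pairwise distinct, so the set $L := \{l_1, \ldots, l_m\}$ has cardinality exactly $m$ and sits inside $U_m := \bigcup_{i=1}^m R_i$. Writing $U_m$ as the disjoint union of $L$ and $U_m \setminus L$, I would use the chain rule to split
\begin{equation*}
H(Y_{U_m}) \;=\; H(Y_{U_m \setminus L}) \;+\; H(Y_L \mid Y_{U_m \setminus L}).
\end{equation*}

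The heart of the argument is to show that the conditional term vanishes, i.e.\ $Y_L$ is a deterministic function of $Y_{U_m \setminus L}$. I would establish this by induction on $j$: assume $Y_{l_1}, \ldots, Y_{l_{j-1}}$ are all determined by $Y_{U_m \setminus L}$. Since $R_j \in \mathcal{R}_{l_j}$, the defining property of a regenerating set gives $H(Y_{l_j} \mid Y_{R_j \setminus \{l_j\}}) = 0$. Now $R_j \setminus \{l_j\} \subseteq U_j \setminus \{l_j\}$, and by the nontrivial union hypothesis none of $l_{j+1}, \ldots, l_m$ lies in $U_j$; hence every element of $R_j \setminus \{l_j\}$ is either in $U_m \setminus L$ or equal to some $l_i$ with $i < j$. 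The induction hypothesis then shows $Y_{R_j \setminus \{l_j\}}$ (and therefore $Y_{l_j}$) is a function of $Y_{U_m \setminus L}$, closing the induction.

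Once $H(Y_L \mid Y_{U_m \setminus L}) = 0$ is in hand, it remains only to bound $H(Y_{U_m \setminus L})$. Using subadditivity of entropy together with $H(Y_i) \leq \alpha$ for every $i$, I would conclude
\begin{equation*}
H(Y_{U_m}) \;=\; H(Y_{U_m \setminus L}) \;\leq\; \alpha \, |U_m \setminus L| \;=\; \alpha \bigl(|U_m| - m\bigr),
\end{equation*}
which is exactly the claimed bound.

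The only step requiring genuine care is the induction showing $Y_L$ is determined by $Y_{U_m \setminus L}$; everything else is bookkeeping with the chain rule and the uniform bound $H(Y_i) \leq \alpha$. The nontrivial union condition is used twice and in exactly the right way: once to guarantee that the $l_j$ are distinct (so $|L| = m$), and once to guarantee that when we apply the regenerating property at step $j$, the set $R_j \setminus \{l_j\}$ does not reach forward to any $l_k$ with $k > j$, which is precisely what makes the induction well-founded.
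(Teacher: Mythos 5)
Your proof is correct, and it takes a slightly different route from the paper's. The paper proves the bound by induction on $m$, at each step partitioning $R_m=\{l_m\}\cup A\cup B$ with $A\subseteq\bigcup_{i<m}R_i$ and $B$ disjoint from it, then absorbing $Y_{l_m}$ via $H(Y_{l_m}\mid Y_{A\cup B})=0$ and applying subadditivity to $Y_B$; the bookkeeping is done one index at a time. You instead isolate the full set $L=\{l_1,\dots,l_m\}$, prove by an inner induction that $H(Y_L\mid Y_{U_m\setminus L})=0$, and then apply subadditivity once to the complement $U_m\setminus L$. Both proofs use the nontrivial-union condition for exactly the same two purposes (distinctness of the $l_j$, and the fact that $R_j$ cannot reach forward to $l_k$ with $k>j$), and both reduce to the per-symbol cap $H(Y_i)\le\alpha$. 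Your version gives a cleaner structural picture, namely that $U_m\setminus L$ is a determining set of size $|U_m|-m$ for all of $U_m$, which makes the statement feel less like an induction artifact and more like a counting fact; the paper's version has a marginally shorter inductive step and avoids the nested induction. Your inner induction is carried out correctly: the containment $R_j\setminus\{l_j\}\subseteq(U_m\setminus L)\cup\{l_1,\dots,l_{j-1}\}$ follows precisely as you argue from $l_k\notin U_j$ for $k>j$, and this is exactly what is needed.
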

\begin{proof}
We prove this by induction on $m$.
First for $m=1$,
\begin{eqnarray*}
H(Y_{R_1}) & =& H(Y_{R_1 \backslash \{l_1\}}) + H(Y_{l_1} |  Y_{R_1 \backslash \{l_1\}}) \\
& = & H(Y_{R_1 \backslash \{l_1\}}) \\
& \le & \alpha(| R_1 | -1).
\end{eqnarray*}

Then suppose the argument holds for $m -1$, where $m >1$.
Let $R_m = \{l_m\} \cup A \cup B$  be a partition of $R_m$ such that $A \subseteq \cup_{i=1}^{m-1} R_i$ and $B \cap (\cup_{i=1}^{m-1} R_i) = \emptyset$.
Because $H(Y_{l_m} | Y_{A \cup B}) = 0$ and $A \subseteq \cup_{i=1}^{m-1} R_i$, it has
\begin{eqnarray*}
H(Y_{\cup_{i=1}^{m} R_i}) &=& H(Y_{\cup_{i=1}^{m-1} R_i}, Y_B) \\
& \le & H(Y_{\cup_{i=1}^{m-1} R_i}) + H(Y_B)\\
& \le & \alpha (| \cup_{i=1}^{m-1} R_i | - (m-1)) + \alpha | B | \\
& = &\alpha (| \cup_{i=1}^{m} R_i | -m),
\end{eqnarray*}
where the last equality comes from the definition of nontrivial union and the partition of $R_m$.
\end{proof}

\subsection{Upper Bound of the Minimum Distance}

We continue to define some notations with respect to an $(n,(M,\alpha),d)$ code and derive an upper bound of $d$.
First, define a function $\Phi(x)$ to be the minimum size of a nontrivial  union of $x$ regenerating sets,
$$\mbox{i.e.,~~}\Phi(x)=\min\{|\cup_{i=1}^xR_i|: R_i\in\mathcal{R}_{l_i} \mbox{~and~ $R_1,...,R_x$ have a nontrivial union}\}\;.$$
In particular, we assume $\Phi(0)=0$. It is easy to see $\Phi(x+1)\geq \Phi(x)+1$, thus $\Phi(x)-x$ is an increasing function with respect to $x$.

Define $$\rho=\max\{x\mid \Phi(x)-x<\frac{M}{\alpha}\}\;.$$
Obviously, $\rho\geq 0$.
The next is a corollary of Proposition \ref{propEntSiz}.
\begin{corollary}\label{corSizOfRegSet}
For $ 0 \le x \le \rho $, let $R^{(x)}_1,R^{(x)}_2,\cdots,R^{(x)}_x$ be a sequence of regenerating sets that has a nontrivial union and $\Phi (x) = | \cup_{i=1}^{x} R^{(x)}_i |$.
Then $[n] - \cup_{i=1}^{x} R^{(x)}_i  \ne \emptyset$.
\end{corollary}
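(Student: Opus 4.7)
The plan is to derive the non-emptiness of the complement directly by contradiction from the entropy bound in Proposition \ref{propEntSiz}, combined with the definition of $\rho$ and the fact that $H(Y_{[n]}) = M$.

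First, I would handle the trivial case $x = 0$ separately: the empty union is contained in $[n]$, and since $n \ge 1$, the complement $[n] \setminus \emptyset = [n]$ is nonempty.

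Next, for $1 \le x \le \rho$, suppose for contradiction that $\cup_{i=1}^{x} R_i^{(x)} = [n]$. Since by assumption the $R_i^{(x)}$'s have a nontrivial union, Proposition \ref{propEntSiz} applies and gives
\begin{equation*}
H(Y_{\cup_{i=1}^{x} R_i^{(x)}}) \;\le\; \alpha\bigl(|\cup_{i=1}^{x} R_i^{(x)}| - x\bigr) \;=\; \alpha(\Phi(x) - x).
\end{equation*}
The definition of $\rho$, together with $x \le \rho$, yields $\Phi(x) - x < M/\alpha$, so $\alpha(\Phi(x) - x) < M$. Hence $H(Y_{\cup_{i=1}^{x} R_i^{(x)}}) < M$. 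On the other hand, the assumed equality $\cup_{i=1}^{x} R_i^{(x)} = [n]$ would force $H(Y_{\cup_{i=1}^{x} R_i^{(x)}}) = H(Y_{[n]}) = M$, which is the desired contradiction.

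There is no real obstacle here; the statement is essentially a direct repackaging of Proposition \ref{propEntSiz} using the defining inequality of $\rho$. The only point that needs a moment of care is to make sure Proposition \ref{propEntSiz} is invoked in the regime where it applies (i.e., $x \ge 1$ with a genuine nontrivial union), which is why I would separate out $x = 0$ at the start.
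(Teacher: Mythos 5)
Your proof is correct and follows essentially the same route as the paper: apply Proposition \ref{propEntSiz} to bound $H(Y_{\cup_i R^{(x)}_i})$ by $\alpha(\Phi(x)-x)$, use the definition of $\rho$ (together with the monotonicity of $\Phi(x)-x$ noted just before the corollary) to get this strictly below $M$, and conclude via $H(Y_{[n]})=M$. The only cosmetic differences are that you phrase it as a contradiction while the paper argues directly, and you carve out $x=0$ explicitly, which is harmless but not needed since $\Phi(0)=0$ and $H(Y_\emptyset)=0<M$ already makes the same argument go through.
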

\begin{proof}
By Proposition \ref{propEntSiz},
\begin{eqnarray*}
H(Y_{\cup_{i=1}^{x} R^{(x)}_i }) & \le & \alpha( | \cup_{i=1}^{x} R^{(x)}_i | - x) \\
& = & \alpha(\Phi(x)-x) \\
& \le & \alpha(\Phi(\rho)-\rho) \\
& < & M.
\end{eqnarray*}
Since $H(Y_1,\cdots,Y_n) = M$, then $[n] - \cup_{i=1}^{x} R^{(x)}_i  \ne \emptyset$.
\end{proof}

\begin{theorem}\label{thmMinDis}
  Let $\mathcal{C}$ be an $(n,(M,\alpha),d)$ code, then $$d\leq n-\lceil\frac{M}{\alpha}\rceil+1-\rho\;.$$
\end{theorem}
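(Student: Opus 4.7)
The plan is to produce, by the definition of minimum distance (Definition \ref{defMinDist}), an explicit set $E\subseteq [n]$ with $H(Y_E)<M$ and $|E|\geq \lceil M/\alpha\rceil-1+\rho$, which immediately yields $d\leq n-|E|=n-\lceil M/\alpha\rceil+1-\rho$. The natural candidate for $E$ is built by starting from a size-minimizing nontrivial union and then padding with isolated coordinates.

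Concretely, I would first invoke the definition of $\rho$ to pick a sequence $R_1,\ldots,R_\rho$ with $R_i\in\mathcal{R}_{l_i}$ forming a nontrivial union of size $\Phi(\rho)$. Let $U=\cup_{i=1}^\rho R_i$. By Proposition \ref{propEntSiz}, $H(Y_U)\leq \alpha(\Phi(\rho)-\rho)$, and the defining property of $\rho$ makes this strictly less than $M$; Corollary \ref{corSizOfRegSet} further guarantees $[n]\setminus U\neq\emptyset$. I would then set $t=\lceil M/\alpha\rceil-1-(\Phi(\rho)-\rho)$, note that $t\geq 0$ because $\Phi(\rho)-\rho$ is an integer strictly less than $M/\alpha$, and form $E=U\cup S$ where $S\subseteq [n]\setminus U$ has cardinality $t$. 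The entropy budget works out cleanly: using $H(Y_i)\leq\alpha$ on each newly added coordinate gives
\begin{equation*}
H(Y_E)\leq H(Y_U)+\alpha t\leq \alpha(\Phi(\rho)-\rho)+\alpha t=\alpha(\lceil M/\alpha\rceil-1)<M,
\end{equation*}
and $|E|=\Phi(\rho)+t=\rho+\lceil M/\alpha\rceil-1$, which is exactly the size needed.

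The only step that requires actual care, and which I expect to be the main obstacle, is verifying that $[n]\setminus U$ is large enough to supply the $t$ padding coordinates, i.e.\ that $n-\Phi(\rho)\geq t$, equivalently $n\geq \lceil M/\alpha\rceil-1+\rho$. I would handle this by extending the chosen nontrivial union of size $\rho$ one coordinate at a time (possible because $d\geq 2$ guarantees each coordinate has a regenerating set, and at each step the complement of the current union is non-empty until it equals $[n]$) to obtain a nontrivial union of some size $\rho^*$ whose members cover all of $[n]$. Applying Proposition \ref{propEntSiz} to this extended family gives $M=H(Y_{[n]})\leq \alpha(n-\rho^*)$, so $\rho^*\leq n-\lceil M/\alpha\rceil$; since $\rho^*$ fails the inequality defining $\rho$, we have $\rho^*\geq \rho+1$, and hence $n\geq \lceil M/\alpha\rceil+\rho+1$, which is more than enough. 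With this verified, combining the steps yields the advertised bound.
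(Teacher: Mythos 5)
Your proposal is correct and follows the same route as the paper's own proof: take a size-minimizing nontrivial union $U$ of $\rho$ regenerating sets, pad it with $t=\lceil M/\alpha\rceil-1-(\Phi(\rho)-\rho)$ extra coordinates, bound $H(Y_E)$ by $\alpha(\lceil M/\alpha\rceil-1)<M$, and invoke Definition~\ref{defMinDist}. The one place you go beyond the paper is in explicitly checking that the complement $[n]\setminus U$ actually contains $t$ coordinates; the paper writes ``choose a set $T'$'' of the required size without comment, leaving the existence of such a $T'$ implicit. Your verification (extend the union one regenerating set at a time until it covers $[n]$, giving $\rho^*\ge\rho+1$ sets, then apply Proposition~\ref{propEntSiz} to get $M\le\alpha(n-\rho^*)$ and hence $n\ge\rho+1+\lceil M/\alpha\rceil$) is sound and fills that gap. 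One small wording issue: the clause ``since $\rho^*$ fails the inequality defining $\rho$, we have $\rho^*\geq\rho+1$'' does not by itself establish $\rho^*\ge\rho+1$ --- the union you built has size $n$ but need not realize $\Phi(\rho^*)$, so you cannot conclude $\Phi(\rho^*)-\rho^*\ge M/\alpha$ from it. Fortunately you do not need that clause: $\rho^*\ge\rho+1$ already follows from the fact that your extension process starts from the $\rho$ sets whose union is a proper subset of $[n]$ (by Corollary~\ref{corSizOfRegSet}) and therefore must add at least one further regenerating set before covering $[n]$.
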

\begin{proof}
Without loss of generality, suppose $\Phi(\rho) = \left| \cup_{i=1}^{\rho} R_i \right|$, where $R_i \in \mathcal{R}_{l_i}$ and $R_1,\cdots,R_\rho$ have a nontrivial union.
Then
$$\left| \cup_{i=1}^{\rho} R_i \right| = \Phi(\rho) \le  \rho + \lceil \frac{M}{\alpha} \rceil -1$$
by the definition of $\rho$.
From Corollary \ref{corSizOfRegSet}, $[n]- \cup_{i=1}^{\rho} R_i \neq \emptyset$.
Furthermore, for any set $T \subseteq[n] - \cup_{i=1}^{\rho} R_i $ with $ \left| T \cup (\cup_{i=1}^{\rho} R_i) \right| \le \rho +\lceil \frac{M}{\alpha} \rceil -1$, we have
\begin{eqnarray*}
H(Y_{(\cup_{i=1}^{\rho}R_i) \cup T}) & \le & H(Y_{\cup_{i=1}^{\rho} R_i}) + H(Y_T)\\
& \le & \alpha ( \left| \cup_{i=1}^{\rho}R_i \right| - \rho) + \alpha \left| T \right| \\
& = & \alpha (\left|( \cup_{i=1}^{\rho}R_i ) \cup T \right| - \rho) \\
& \le & \alpha (\lceil \frac{M}{\alpha} \rceil -1) \\
& < & M.
\end{eqnarray*}
Particularly, choose a set $T^\prime \subseteq[n] -\cup_{i=1}^{\rho} R_i$ with $ \left| T^\prime \cup (\cup_{i=1}^{\rho} R_i) \right| = \rho +\lceil \frac{M}{\alpha} \rceil -1$, then
$H(Y_{(\cup_{i=1}^{\rho} R_i )\cup T^\prime}) < M.$
Thus by Definition \ref{defMinDist}, $d \le n- \left| (\cup_{i=1}^{\rho}R_i) \cup T^\prime \right| = n- \lceil \frac{M}{\alpha} \rceil+1 - \rho$.
\end{proof}

From the theorem, upper-bounding the minimum distance mainly depends on computing the value of $\rho$ which in turn relies on computation of the function $\Phi(x)$.

\section{Codes with locality}\label{secRepTol}
Next we give the general definition of locality.
It can be regarded as an extension of the repair tolerance defined in \cite{LRC_Alternatives_Oggier2013} to include the vector case and the nonlinear case.
\begin{definition}\label{defLocRepTol}
Let $\mathcal{C}$ be an $(n,(M,\alpha),d)$ code.
For $i \in [n]$, we say the $i$-th coordinate of $\mathcal{C}$ has locality $r$ with repair tolerance $\delta -1$, denoted as locality $r^{(\delta-1)}$, if for all subset $E \subseteq [n]$ containing $i$ with $\left| E \right| \le \delta -1$, there exists a regenerating set $R \in \mathcal{R}_i$ such that
\begin{itemize}
\item[(1)~]$\left| R \right| \le r+1$, and
\item[(2)~]$R \cap E = \{i\}$.
\end{itemize}
\end{definition}

That is, a coordinate of $\mathcal{C}$ has locality $r^{(\delta-1)}$ if for any codeword of $\mathcal{C}$, the value at this coordinate can be regenerated by accessing at most $r$ other coordinates even in the presence of any other $\delta-2$ erasures.
The generalization of our definition of locality $r^{(\delta-1)}$ is twofold.
When $\delta=2$, it coincides with the repair locality $r$ defined for vector codes in \cite{Loc_repair_codes}, and certainly coincides with the repair locality $r$ in \cite{On_the_locality_of codeword_symbols_Huang} if we further restrict $\mathcal{C}$ to a linear scalar code.
When $\delta > 2$, the definition of locality $r^{(\delta-1)}$ describes the repair tolerance of $\delta-1$ erasures in the most general way, instead of specifying the structure of local repair groups that provides the $(\delta-1)$-erasure tolerance.
Therefore, the locality defined in \cite{r_delta_Prakash2012,RankMetric} by using inner-error-correcting code and that in \cite{r_delta_c_2013} by using disjoint repair sets both fall into the scope of our definition.
We call an $(n,(M,\alpha),d)$ code $\mathcal{C}$ has locality $r^{(\delta-1)}$ if for all $i \in [n]$ the $i$-th coordinate of $\mathcal{C}$ has locality $r^{(\delta-1)}$.

In the following we reinvestigate some previously studied locality from a combinatorial perspective.
Namely, we describe the locality by specifying the structure of their regenerating sets and upper-bound the minimum distance by estimating the size of some set unions.

\subsection{The Code $\mathcal{C}(n,r,d,\alpha)$}
As defined in \cite{Loc_repair_codes} the $i$-th coordinate of a code has repair locality $r$ if the value at this coordinate is a function of values at $r$ other coordinates.
The notation $ C(n,r,d,\alpha)$ is used there to denote  a code with all symbol locality $r$.
By using the concept of regenerating sets, the code $\mathcal{C}(n,r,d,\alpha)$ is an $(n,(M,\alpha),d)$ code satisfying that for all $i\in [n]$ there exists a set $R_i\in\mathcal{R}_i$ with $|R_i|\leq r+1$.

\begin{lemma}\label{lemPhiLocR}
 For a code $\mathcal{C}(n,r,d,\alpha)$, it holds that $\Phi(x)\leq (r+1)x$, where $0\leq x\leq \rho+1$.
\end{lemma}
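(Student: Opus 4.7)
The plan is to prove the bound by induction on $x$, exhibiting at each step a concrete sequence of regenerating sets whose nontrivial union witnesses $\Phi(x) \leq (r+1)x$. The base case $x = 0$ is immediate from the convention $\Phi(0) = 0$. For the inductive step, assuming the bound holds for $x - 1$ where $1 \leq x \leq \rho + 1$, I would let $R_1,\ldots,R_{x-1}$ be a sequence of regenerating sets with nontrivial union that actually achieves $\Phi(x-1)$, so that $|\cup_{i=1}^{x-1} R_i| = \Phi(x-1) \leq (r+1)(x-1)$.

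The key move is to extend this sequence by one more regenerating set without violating the bound. Since $x - 1 \leq \rho$ and since the chosen $R_1,\ldots,R_{x-1}$ is a minimizer of $\Phi(x-1)$, Corollary~\ref{corSizOfRegSet} applies and yields $[n] \setminus \cup_{i=1}^{x-1} R_i \neq \emptyset$. I would then pick any $l_x$ in this nonempty complement, and invoke the defining property of a $\mathcal{C}(n,r,d,\alpha)$ code to obtain some $R_x \in \mathcal{R}_{l_x}$ with $|R_x| \leq r+1$. By construction $l_x \notin \cup_{i=1}^{x-1} R_i$, so the extended sequence $R_1,\ldots,R_x$ still has a nontrivial union, and
$$\Phi(x) \leq \Bigl|\bigcup_{i=1}^{x} R_i\Bigr| \leq \Bigl|\bigcup_{i=1}^{x-1} R_i\Bigr| + |R_x| \leq (r+1)(x-1) + (r+1) = (r+1)x,$$
closing the induction.

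I do not anticipate a real obstacle: the whole point of defining $\rho$ via $\Phi(\rho)-\rho < M/\alpha$ and of establishing Corollary~\ref{corSizOfRegSet} was precisely to guarantee that such a greedy one-step extension is always possible in the regime $x \leq \rho + 1$. The only subtle point worth stating carefully is that Corollary~\ref{corSizOfRegSet} should be applied to a minimizer of $\Phi(x-1)$ rather than to an arbitrary nontrivial union of $x-1$ regenerating sets; this is why the induction hypothesis is used in the specific form of choosing a minimizing sequence at each stage.
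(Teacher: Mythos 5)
Your proposal is correct and takes essentially the same approach as the paper: induction on the number of regenerating sets, using Corollary~\ref{corSizOfRegSet} applied to a minimizing sequence to guarantee a fresh coordinate outside the current union, then extending greedily by a regenerating set of size at most $r+1$. The subtlety you flag (applying the corollary to a minimizer rather than an arbitrary nontrivial union) is indeed exactly how the paper's proof proceeds.
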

\begin{proof}
 We prove this lemma by induction on $x$. Because $\Phi(0)=0$, the lemma trivially holds for $x=0$.
 Assume it holds for $x $, where $x \le \rho$.
 Let $T_x$ be the union of a sequence of $x$ regenerating sets that has a nontrivial union and $\Phi(x) = |T_x| \le (r+1)x$.
 From Corollary \ref{corSizOfRegSet}, $[n] - T_x \ne \emptyset$.
 It follows that there exists $h \in [n] - T_x$ and $R \in \mathcal{R}_h$ with $| R | \le r+1$, therefore
  \begin{eqnarray*}
  \Phi(x+1) & \le & | T_x \cup R | \\
  & \le & | T_x | + | R | \\
  & \le & (r+1)(x+1).
  \end{eqnarray*}
\end{proof}

\begin{theorem}\label{thmDisLocR}
  For a code $\mathcal{C}(n,r,d,\alpha)$, it has
  $$d\leq n-\lceil\frac{M}{\alpha}\rceil-\lceil\frac{M}{r\alpha}\rceil+2\;.$$
\end{theorem}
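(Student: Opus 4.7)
The plan is to combine Theorem~\ref{thmMinDis} with Lemma~\ref{lemPhiLocR} by using the latter to produce a lower bound on $\rho$, which then upper-bounds $d$ via the former. This is essentially a one-step deduction once the right inequality is extracted from the lemma.

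First, I would observe that Lemma~\ref{lemPhiLocR} gives $\Phi(x)-x \leq rx$ for all $0\leq x\leq \rho+1$. Hence whenever $x$ is a nonnegative integer with $rx < M/\alpha$, we automatically get $\Phi(x)-x < M/\alpha$, and by the definition of $\rho$ this forces $x\leq \rho$. So it suffices to identify the largest integer $x$ satisfying $rx < M/\alpha$, i.e.\ $x < M/(r\alpha)$; the largest such integer is $\lceil M/(r\alpha)\rceil - 1$. This yields
\[
\rho \;\geq\; \Big\lceil \tfrac{M}{r\alpha} \Big\rceil - 1.
\]

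Finally, I would substitute this lower bound into Theorem~\ref{thmMinDis}:
\[
d \;\leq\; n - \Big\lceil \tfrac{M}{\alpha}\Big\rceil + 1 - \rho \;\leq\; n - \Big\lceil \tfrac{M}{\alpha}\Big\rceil - \Big\lceil \tfrac{M}{r\alpha}\Big\rceil + 2,
\]
which is the desired bound.

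There is no real obstacle here; the argument is purely bookkeeping once Lemma~\ref{lemPhiLocR} is in hand. The only minor subtlety is making sure that the integer $x=\lceil M/(r\alpha)\rceil-1$ actually lies in the range $0\leq x\leq \rho+1$ where Lemma~\ref{lemPhiLocR} is applicable; this is handled by the usual convention $\Phi(0)=0$ together with the fact that the inequality $\Phi(x)-x\leq rx$ is itself what we use to certify $x\leq \rho$, so no circularity arises. The whole proof can thus be written in just a couple of displayed inequalities.
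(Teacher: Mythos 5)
Your proposal is correct and takes essentially the same approach as the paper: both apply Lemma~\ref{lemPhiLocR} at $x=\rho+1$ together with the maximality of $\rho$ (which gives $M/\alpha \le \Phi(\rho+1)-(\rho+1)$) to conclude $r(\rho+1)\ge M/\alpha$, hence $\rho\ge \lceil M/(r\alpha)\rceil-1$, and then plug into Theorem~\ref{thmMinDis}. The only difference is cosmetic phrasing; the paper writes the chain of inequalities directly, whereas you phrase it as identifying the largest integer below $M/(r\alpha)$, but the content and the use of the two key ingredients are identical.
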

\begin{proof}
By the definition of $\rho$, $\frac{M}{\alpha} \le \Phi(\rho+1) - (\rho+1)$, and $\Phi(\rho+1) \le (r+1)(\rho+1)$ from Lemma \ref{lemPhiLocR}.
It follows that
\begin{eqnarray*}
\frac{M}{\alpha} & \le & \Phi(\rho+1) - (\rho+1) \\
& \le & (r+1)(\rho+1) - (\rho+1) \\
& = & r(\rho+1),
\end{eqnarray*}
and therefore $\rho \ge \lceil \frac{M}{r \alpha} \rceil -1$.
Consequently, $d\leq n-\lceil\frac{M}{\alpha}\rceil-\lceil\frac{M}{r\alpha}\rceil+2$ by Theorem \ref{thmMinDis}.
\end{proof}

\subsection{The $(n,r,\delta,\alpha)$ Locally Repairable Code}
The $(n,r,\delta,\alpha)$ locally repairable code defined in \cite{RankMetric} is a generalization of the $(r,\delta)$ locality which was first proposed in \cite{On_the_locality_of codeword_symbols_Huang}.
This locality is due to a subcode of length no more than $r+\delta-1$ and minimum distance at least $\delta$.
In other words, an $(n,r,\delta,\alpha)$ locally repairable code is an $(n,(M,\alpha),d)$ code such that for $1\leq i\leq n$, there exists a subset $S_i\subseteq [n]$ satisfying
\begin{itemize}
  \item[(1)~]$i\in S_i$, $\delta \le |S_i|\leq r+\delta-1$; and
  \item[(2)~]For any $E\subseteq S_i$ with $|E|=\delta-1$, and for any $j\in E$, it has $(S_i-E)\cup\{j\}\in\mathcal{R}_j$.
\end{itemize}

\begin{lemma}\label{lemPhiRDeltaLoc}
  For an $(n,r,\delta,\alpha)$ locally repairable code, it holds that $\Phi(x)\leq r\lceil\frac{x}{\delta-1}\rceil+x$, where $0\leq x\leq \rho+1$.
\end{lemma}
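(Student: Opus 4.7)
I would prove this by induction on $x$, but with a strengthened hypothesis that carries structural information about the sequence, since as I will explain the numerical bound alone is insufficient to close the induction. The base case $x=0$ is immediate from $\Phi(0)=0$.

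The key structural observation is that each local set $S_i$ (of size at most $r+\delta-1$) generates a ``batch'' of $\delta-1$ regenerating sets: for any $E\subseteq S_i$ with $|E|=\delta-1$, condition (2) in the definition of the $(n,r,\delta,\alpha)$ locally repairable code guarantees $(S_i-E)\cup\{j\}\in\mathcal{R}_j$ for every $j\in E$, and each such regenerating set has cardinality $|S_i|-\delta+2\le r+1$. Adjoining the $\delta-1$ regenerating sets of this batch one at a time produces a nontrivial union equal to $S_i$, with the first adjoining contributing up to $r+1$ new elements to the running union and each subsequent one contributing exactly one new element, since all share the common piece $S_i-E$ already in the union.

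The induction step proceeds by cases on $x\bmod(\delta-1)$. If $x=k(\delta-1)$, the target bound permits a jump of $r+1$ and I start a new batch: Corollary \ref{corSizOfRegSet} supplies a coordinate $h\in[n]-T_x$, I pick any $E\subseteq S_h$ with $h\in E$ and $|E|=\delta-1$, and I adjoin $R=(S_h-E)\cup\{h\}$, whose size is at most $r+1$. Otherwise, writing $x=k(\delta-1)+s$ with $0<s<\delta-1$, the ceiling $\lceil x/(\delta-1)\rceil$ remains unchanged upon incrementing $x$, so the target bound permits growth of only $1$; by the strengthened hypothesis, the active local set $S^{(k+1)}$ still has a reserve $l\in E^{(k+1)}\setminus\{l_{x-s+1},\ldots,l_x\}$ lying in $[n]-T_x$, and adjoining $R=(S^{(k+1)}-E^{(k+1)})\cup\{l\}$ grows the union by exactly one since $S^{(k+1)}-E^{(k+1)}\subseteq T_x$ from the earlier portion of the batch. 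A brief arithmetic check on $\lceil(x+1)/(\delta-1)\rceil$ confirms the numerical bound in each case.

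The principal obstacle is formulating the strengthened hypothesis correctly. A naive version tracking only $|T_x|$ fails in the second case, because an arbitrary $\Phi(x)$-minimizing sequence need not admit any extension by a regenerating set contributing only one new element. The hypothesis must therefore additionally assert that the constructed sequence carries a designated active local set $S^{(k+1)}$ together with its set $E^{(k+1)}$ of chosen repair coordinates and at least one still-unused reserve element outside $T_x$ whenever $x\not\equiv 0\pmod{\delta-1}$. With this structural invariant threaded through the induction, both cases close and the numerical verification reduces to routine arithmetic.
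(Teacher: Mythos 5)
The batch idea at the heart of your argument is the same one the paper uses: from a single local set $S_h$ you can extract $\delta-1$ overlapping regenerating sets $(S_h-E)\cup\{i_j\}$, the first of which contributes up to $r+1$ new elements and each subsequent one exactly one. However, your induction does not close, because the strengthened invariant you propose --- that whenever $x\not\equiv 0\pmod{\delta-1}$ there is a reserve element of the active batch's $E^{(k+1)}$ still lying outside $T_x$ --- cannot always be established at the moment you start a new batch. You say you ``pick any $E\subseteq S_h$ with $h\in E$ and $|E|=\delta-1$,'' but if $|S_h - T_x|<\delta-1$ there is no way to choose $E$ with any element of $E\setminus\{h\}$ outside $T_x$: all the would-be reserves are already in the union, so the invariant fails immediately at $x=k(\delta-1)+1$. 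Worse, this is not a defect of one particular $h$; it can happen that every coordinate $h\in[n]-T_x$ has $|S_h-T_x|<\delta-1$, so no choice of $h$ rescues the invariant.

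The paper sidesteps exactly this by not stepping one regenerating set at a time. It anchors at $x_0+1=a(\delta-1)+b$, takes an optimal union $T_{a(\delta-1)}$ of $a(\delta-1)$ regenerating sets, and then splits into two cases. If some $h$ has $|S_h-T_{a(\delta-1)}|\ge\delta-1$, it chooses $E\subseteq S_h-T_{a(\delta-1)}$ and adjoins $b$ members of the batch in one step, giving growth at most $|S_h-E|+b\le r+b$. Otherwise --- the case your invariant cannot reach --- for every $h\notin T_{a(\delta-1)}$ the set $R_h=(S_h\cap T_{a(\delta-1)})\cup\{h\}$ is already a regenerating set of $h$ (because $|S_h-T_{a(\delta-1)}|\le\delta-2$ and supersets of regenerating sets are regenerating sets), and adjoining $b$ such $R_h$'s grows the union by exactly $b$, which is even better than the target. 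To repair your proof you would have to weaken the invariant so that it is allowed to hold vacuously when no reserve exists, and then supply the paper's Case~2 construction as the fallback; as written, the one-step induction with the reserve-element invariant has a genuine hole.
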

\begin{proof}
We prove this lemma by induction on $x$.
First, it trivially holds for $x=0$.
Suppose it holds for $x \le x_0$, where $0\le x_0 \le \rho$.
Denote $x_0+1 = a (\delta-1) + b$ where $a \in \mathbb{Z}$ and $b\in [\delta-1]$.
Let $T_{a(\delta-1)} = R_1 \cup \cdots \cup R_{a(\delta-1)}$ be a nontrivial union of $a(\delta-1)$ regenerating sets such that $\Phi(a(\delta-1)) = |T_{a(\delta-1)}|$. There are two cases:

(1) There exists $ h \in [n] - T_{a(\delta-1)}$ such that $|S_h - T_{a(\delta-1)}| \geq \delta-1$, where the notation $S_h$ comes from the description before this lemma.
Choose $E \subseteq S_h - T_{a(\delta-1)}$ with $|E| = \delta-1$.
Suppose $E = \{i_1,\cdots,i_{\delta-1}\}$.
Let $R_{i_j} = (S_h - E) \cup \{i_j\}$ for $j \in[\delta-1]$.
Then $R_{i_j}\in\mathcal{R}_{i_j}$ and $(\cup_{j=1}^{a(\delta-1)}R_j) \cup (\cup_{j=1}^{b} R_{i_j})$ is a nontrivial union. It follows that
\begin{eqnarray*}
\Phi(x_0+1)& \le& |T_{a(\delta-1)} \cup R_{i_1} \cup \cdots \cup R_{i_b}| \\
& \le & \Phi(a(\delta-1)) + |S_h -E|+b \\
& \le & ar+a(\delta-1)+r+b\\
& = & r \lceil \frac{x_0+1}{\delta-1} \rceil +x_0+1.
\end{eqnarray*}

(2) For any $h \in [n] - T_{a(\delta-1)}$, $|S_h - T_{a(\delta-1)}| < \delta-1$. Define $R_h = (S_h \cap T_{a(\delta-1)}) \cup \{h\}$, then
$R_h \in \mathcal{R}_h$.
If $n - |T_{a(\delta-1)}| \ge b$, then choose $ h_1,\cdots,h_b \in [n] - T_{a(\delta-1)}$.
So
\begin{eqnarray*}
\Phi(x_0+1) & \le & |T_{a(\delta-1)}\cup R_{h_1} \cup \cdots \cup R_{h_b}| \\
& = & |T_{a(\delta-1)}|+b \\
& = & \Phi(a(\delta-1)) +b \\
& \le & r \lceil \frac{x_0+1}{\delta-1} \rceil +x_0+1.
\end{eqnarray*}
If $n - |T_{a(\delta-1)}| < b$, then
$$\Phi(x_0+1)\leq n < |T_{a(\delta-1)}|+b\le  r \lceil \frac{x_0+1}{\delta-1} \rceil +x_0+1.$$

\end{proof}

\begin{theorem}
  For an $(n,r,\delta,\alpha)$ locally repairable code, it has
  $$d\leq n-\lceil\frac{M}{\alpha}\rceil+1-(\lceil\frac{M}{r\alpha}\rceil-1)(\delta-1)\;.$$
\end{theorem}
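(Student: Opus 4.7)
The plan is to mirror the proof of Theorem \ref{thmDisLocR} exactly, only substituting the tighter bound on $\Phi$ provided by Lemma \ref{lemPhiRDeltaLoc}. In other words, the entire job is to parlay the combinatorial inequality $\Phi(x)\le r\lceil x/(\delta-1)\rceil + x$ into a lower bound on $\rho$, and then feed $\rho$ into Theorem \ref{thmMinDis}.

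First I would apply the defining property of $\rho$ at $x=\rho+1$, giving
\[
\tfrac{M}{\alpha}\;\le\;\Phi(\rho+1)-(\rho+1).
\]
Combining this with Lemma \ref{lemPhiRDeltaLoc} evaluated at $x=\rho+1$ yields
\[
\tfrac{M}{\alpha}\;\le\;r\left\lceil \tfrac{\rho+1}{\delta-1}\right\rceil .
\]
Since the left-hand side has a factor of $r$ on the right and the ceiling is an integer, dividing by $r$ and taking ceilings on the left yields $\lceil M/(r\alpha)\rceil \le \lceil (\rho+1)/(\delta-1)\rceil$.

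Next I would convert this into an explicit lower bound on $\rho$. Let $k=\lceil M/(r\alpha)\rceil$. The inequality $\lceil (\rho+1)/(\delta-1)\rceil\ge k$ is equivalent to $(\rho+1)/(\delta-1) > k-1$, i.e.\ $\rho+1 > (k-1)(\delta-1)$. As both sides are integers, this forces $\rho \ge (k-1)(\delta-1) = \bigl(\lceil M/(r\alpha)\rceil-1\bigr)(\delta-1)$. Plugging this into Theorem \ref{thmMinDis} gives exactly the claimed bound
\[
d\;\le\;n-\left\lceil\tfrac{M}{\alpha}\right\rceil+1-\left(\left\lceil\tfrac{M}{r\alpha}\right\rceil-1\right)(\delta-1).
\]

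The substantive work has already been done in Lemma \ref{lemPhiRDeltaLoc}; the only mild care needed here is the ceiling-function manipulation in the final step, where one must remember that $\lceil a\rceil \ge k$ for integer $k$ means $a>k-1$ rather than $a\ge k$, and that both $\delta-1$ and $k-1$ are integers so that the strict inequality upgrades to the stated integer lower bound on $\rho$. Beyond this, the proof is a two-line consequence of the already-established machinery and should be presented in essentially the same format as the proof of Theorem \ref{thmDisLocR}.
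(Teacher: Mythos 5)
Your proposal is correct and follows essentially the same route as the paper: invoke the definition of $\rho$ at $\rho+1$, apply Lemma \ref{lemPhiRDeltaLoc}, manipulate ceilings to get $\rho\ge(\lceil M/(r\alpha)\rceil-1)(\delta-1)$, and plug into Theorem \ref{thmMinDis}. The only (cosmetic) difference is in how the final ceiling step is phrased—the paper writes $(\lceil\frac{\rho+1}{\delta-1}\rceil-1)(\delta-1)\le\rho$ directly while you derive the equivalent strict inequality $\rho+1>(k-1)(\delta-1)$ and round—but these are the same argument.
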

\begin{proof}
Similar to the proof of Theorem \ref{thmDisLocR}, we have
$$\frac{M}{\alpha} \le \Phi(\rho+1) - (\rho+1)  \le r \lceil \frac{\rho+1}{\delta-1} \rceil.$$
It follows that $\lceil \frac{M}{r\alpha} \rceil \le \lceil \frac{\rho+1}{\delta-1} \rceil$, and therefore $(\lceil\frac{M}{r\alpha}\rceil-1)(\delta-1) \le (\lceil \frac{\rho+1}{\delta-1} \rceil -1)(\delta-1) \le \rho$. Then Theorem \ref{thmMinDis} gives the desired bound.
\end{proof}

\subsection{The $(r,\delta)_c$-Locality}

An $(n,(M,\alpha),d)$ code has $(r,\delta)_c$-locality if for $1\leq i\leq n$, there exist
$R_{i,1},R_{i,2},...,R_{i,\delta-1}\in\mathcal{R}_i$ satisfying
\begin{itemize}
  \item[(1)~]$|R_{i,j}|\leq r+1$ for $1\leq j\leq \delta-1$; and
  \item[(2)~]$R_{i,j}\bigcap R_{i,j'}=\{i\}$ for $1\leq j\neq j'\leq \delta-1$.
\end{itemize}

Paper \cite{r_delta_c_2013} considered the $(r,\delta)_c$-locality only for the linear scalar case, so in the following we set $\alpha=1$ and consider linear codes.
\begin{lemma}\label{lemPhiRDeltaCLoc}
  For a linear $(n,(M,1),d)$ code with $(r,\delta)_c$-locality, it holds $\Phi(x)\leq rx+\lceil\frac{x}{\delta-1}\rceil$ where $0\leq x\leq \rho+1$.
\end{lemma}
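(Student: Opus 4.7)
The plan is to prove the inequality by induction on $x$, following the template of the proof of Lemma \ref{lemPhiRDeltaLoc} but exploiting the disjoint-repair-set structure specific to $(r,\delta)_c$-locality. The base case $x=0$ is immediate since $\Phi(0)=0$.

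For the inductive step at $x_0+1$, I would write $x_0+1 = a(\delta-1)+b$ with $b\in\{1,\ldots,\delta-1\}$, so that $\lceil(x_0+1)/(\delta-1)\rceil = a+1$, and apply the induction hypothesis at $x=a(\delta-1)$. Let $T$ be a nontrivial union of $a(\delta-1)$ regenerating sets with $|T|=\Phi(a(\delta-1))\le ra(\delta-1)+a$. Since $a(\delta-1)\le x_0\le \rho$, Corollary \ref{corSizOfRegSet} supplies a fresh coordinate $h\in[n]\setminus T$. The crucial observation in the linear scalar setting is that each $R_{h,j}$ may be replaced by a minimal linearly dependent subset (a circuit) $C_j\subseteq R_{h,j}$ containing $h$: minimality forces every coefficient of the unique dependence to be nonzero, so $C_j$ is simultaneously a regenerating set for \emph{every} coordinate it contains, while $|C_j|\le |R_{h,j}|\le r+1$ and $C_j\cap C_{j'}\subseteq R_{h,j}\cap R_{h,j'}=\{h\}$.

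I would then extend the sequence by $b$ regenerating sets as follows: take $C_1$ with center $h$, and for $j=2,\ldots,b$ take $C_j$ with a center $h_j\in(C_j\setminus\{h\})\setminus T$. The first addition contributes at most $r+1$ new elements, and for $j\ge 2$, since the $C_j$ pairwise intersect only in $\{h\}$, which is already contained in $T\cup C_1$, each $C_j$ contributes at most $|C_j|-1\le r$ new elements. Summing gives
\begin{equation*}
\Phi(x_0+1)\le ra(\delta-1)+a+(r+1)+(b-1)r = r(x_0+1)+(a+1),
\end{equation*}
which is exactly the claimed bound.

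The main obstacle is guaranteeing that the fresh centers $h_j$ actually exist, i.e., that at least $b-1$ of the rays $C_2,\ldots,C_{\delta-1}$ stick out of $T$. My plan is to resolve this by a case split in the spirit of Case (2) in the proof of Lemma \ref{lemPhiRDeltaLoc}: either some fresh $h$ possesses enough rays with an element outside $T$, or else every $h\in[n]\setminus T$ has at least $\delta-b$ rays absorbed into $T\cup\{h\}$, which, together with the pairwise disjointness of these rays outside $\{h\}$, forces the available fresh coordinates to be so scarce that the trivial estimate $\Phi(x_0+1)\le n$ already yields the claimed inequality. Arranging the arithmetic so that it lands cleanly on $r(x_0+1)+\lceil(x_0+1)/(\delta-1)\rceil$ in this degenerate case is where I expect the greatest care to be needed.
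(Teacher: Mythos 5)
Your overall plan mirrors the paper's: induction on $x$, the decomposition $x_0+1 = a(\delta-1)+b$ with $b\in[\delta-1]$, and a case split on how the rays at a fresh coordinate $h$ interact with the union $T$. Your main-case argument is correct and actually an improvement in rigor: passing to circuits $C_j\subseteq R_{h,j}$ makes precise what the paper only asserts ``because of linearity,'' and the arithmetic $ra(\delta-1)+a+(r+1)+(b-1)r = r(x_0+1)+(a+1)$ checks out.

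The gap is in the degenerate case, and it is a real one. Your stated plan is to argue that when every $h\in[n]\setminus T$ has at least $\delta-b$ rays absorbed into $T\cup\{h\}$, the pairwise disjointness of these rays outside $\{h\}$ makes fresh coordinates ``scarce'' so that $\Phi(x_0+1)\le n$ finishes. This does not work: the $\ge\delta-b$ disjoint absorbed rays at a single $h$ only give $|T|\ge\delta-b$, which says nothing about how large $[n]\setminus T$ is. In fact your Case B is \emph{vacuous}: since $b\le\delta-1$, every $h\notin T$ would then have at least one regenerating set contained in $T\cup\{h\}$, hence $H(Y_h\mid Y_T)=0$ for every $h\notin T$, hence $M=H(Y_{[n]})=H(Y_T)\le |T|-a(\delta-1)=\Phi(a(\delta-1))-a(\delta-1)<M$ by Proposition \ref{propEntSiz} and the definition of $\rho$, a contradiction. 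The paper avoids this blind alley by using a coarser dichotomy: either some $h$ has \emph{all} $\delta-1$ rays disjoint from $T$ (your Case A applies a fortiori), or else every fresh $h$ has some ray merely \emph{intersecting} $T$ --- not necessarily absorbed. In the latter situation one picks $b$ fresh centers $h_1,\dots,h_b$ greedily, each contributing at most $r$ new elements because its chosen ray already meets $T$, and the trivial bound $\Phi(x_0+1)\le n$ is only invoked to cover the event $n<|T|+br$ where the greedy choice runs out of room. So either drop Case B entirely by proving it contradictory, or replace your absorption condition and scarcity heuristic with the paper's intersection condition plus the explicit $n\ge|T|+br$ versus $n<|T|+br$ split.
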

\begin{proof}
This lemma is proved by induction on $x$.
First, it trivially holds for $x=0$.
Suppose it holds for $x \le x_0$, where $0\le x_0 \le \rho$.
Denote $x_0+1 = a (\delta-1) + b$ where $a \in \mathbb{Z}$ and $b\in [\delta-1]$.
Let $T_{a(\delta-1)} = R_1 \cup \cdots \cup R_{a(\delta-1)}$ be a nontrivial union of $a(\delta-1)$ regenerating sets such that $\Phi(a(\delta-1)) = |T_{a(\delta-1)}|$. There are two cases:

(1) There exists $ h \in [n] - T_{a(\delta-1)}$ such that $R_{h,j} \cap T_{a(\delta-1)} = \emptyset$ for $j \in [\delta-1]$.
Because of linearity, for $1 \le j \le \delta-1$, there exists $i_j \in R_{h,j} - \{h\}$ such that $R_{h,j} \in \mathcal{R}_{i_j}$.
Then $T_{a(\delta-1)} \cup R_{h,1} \cup \cdots \cup R_{h,b}$ is a nontrivial union.
It follows that
\begin{align*}
\Phi(x_0+1) & \le |T_{a(\delta-1)} \cup R_{h,1} \cup \cdots \cup R_{h,b}| \\
& \le \Phi(a(\delta-1)) + |R_{h,1} \cup \cdots \cup R_{h,b}| \\
& \le ra(\delta-1)+a + rb+1 \\
& = r(x_0+1) + \lceil \frac{x_0+1}{\delta-1} \rceil.
\end{align*}

(2) For any $h \in [n] - T_{a(\delta-1)}$, there exists $j_h \in [\delta-1]$ such that $R_{h,j_h} \cap T_{a(\delta-1)} \ne \emptyset$.
If $n \ge |T_{a(\delta-1)}|+br$, then there exists $h_1,\cdots,h_b$ such that
$$ h_l \in [n] - (T_{a(\delta-1)} \cup R_{h_1,j_{h_1}} \cup \cdots \cup R_{h_1,j_{h_{l-1}}}), \text{ for } 1 \le l \le b $$
because $|T_{a(\delta-1)} \cup R_{h_1,j_{h_1}} \cup \cdots \cup R_{h_1,j_{h_{l-1}}}| \le |T_{a(\delta-1)}| + (l-1)r < n$.
Therefore $R_{h,j_{h}} \in \mathcal{R}_{h}$ for $h \in \{h_1,\cdots,h_b\}$ and $T_{a(\delta-1)} \cup R_{h_1,j_{h_1}} \cup \cdots \cup R_{h_1,j_{h_b}}$ is a nontrivial union.
It follows that
\begin{align*}
\Phi(x_0+1) & \le |T_{a(\delta-1)} \cup R_{h_1,j_{h_1}} \cup \cdots \cup R_{h_1,j_{h_b}}| \\
& \le |T_{a(\delta-1)}| + rb \\
& \le ra(\delta-1)+a + rb \\
& < r(x_0+1) + \lceil \frac{x_0+1}{\delta-1} \rceil.
\end{align*}
If $n < |T_{a(\delta-1)}|+br$, then
\begin{align*}
\Phi(x_0+1) & \le n < |T_{a(\delta-1)}|+rb \\
& < r(x_0+1) + \lceil \frac{x_0+1}{\delta-1} \rceil.
\end{align*}
\end{proof}

\begin{theorem}\label{thmDisRDeltaC}
For a linear $(n,(M,1),d)$ code with $(r,\delta)_c$-locality, it has
\begin{equation*}
d \le n - M +1 - \mu,
\end{equation*}
where $\mu = \lceil \frac{(M-1)(\delta-1)+1}{(r-1)(\delta-1)+1} \rceil -1$.
\end{theorem}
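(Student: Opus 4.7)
The plan is to combine Theorem~\ref{thmMinDis} with Lemma~\ref{lemPhiRDeltaCLoc} in exactly the same pattern used for the two previous theorems in this section: first extract a lower bound on $\rho$ from the combinatorial estimate on $\Phi$, then plug into the general distance bound. Because $\alpha=1$ here, Theorem~\ref{thmMinDis} immediately reduces to $d\le n-M+1-\rho$, so everything comes down to showing $\rho\ge\mu$.

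First I would invoke the defining property of $\rho$, namely $\Phi(\rho+1)-(\rho+1)\ge M$, and pair it with Lemma~\ref{lemPhiRDeltaCLoc} applied at $x=\rho+1$ to obtain
\begin{equation*}
M+\rho+1\;\le\;\Phi(\rho+1)\;\le\;r(\rho+1)+\left\lceil\tfrac{\rho+1}{\delta-1}\right\rceil,
\end{equation*}
which rearranges to the cleaner inequality
\begin{equation*}
M\;\le\;(r-1)(\rho+1)+\left\lceil\tfrac{\rho+1}{\delta-1}\right\rceil.
\end{equation*}

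The main work is converting this into the ceiling expression defining $\mu$. I would write $\rho+1=a(\delta-1)+b$ with $a\ge 0$ and $1\le b\le \delta-1$, so that $\lceil(\rho+1)/(\delta-1)\rceil=a+1$. Substituting and collecting terms gives
\begin{equation*}
M-1\;\le\;a\bigl[(r-1)(\delta-1)+1\bigr]+(r-1)b.
\end{equation*}
Setting $N=(r-1)(\delta-1)+1$, multiplying through by $\delta-1$, and using the identity $(r-1)(\delta-1)=N-1$ yields
\begin{equation*}
(M-1)(\delta-1)+1\;\le\;N\bigl[a(\delta-1)+b\bigr]-(b-1)\;\le\;N(\rho+1),
\end{equation*}
since $b\ge 1$. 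This is exactly $(M-1)(\delta-1)+1\le N(\rho+1)$, so dividing by $N$ and using integrality of $\rho+1$ gives $\lceil\frac{(M-1)(\delta-1)+1}{(r-1)(\delta-1)+1}\rceil\le\rho+1$, i.e.\ $\mu\le\rho$.

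Finally, substituting $\rho\ge\mu$ into Theorem~\ref{thmMinDis} yields the desired bound. The only subtle step is the ceiling manipulation in the middle paragraph: one has to choose the representation $\rho+1=a(\delta-1)+b$ with $b$ running from $1$ to $\delta-1$ (rather than the usual $0$ to $\delta-2$) so that the ceiling evaluates uniformly to $a+1$, and then exploit the fact that the $(\delta-1)$-multiplied remainder term $-(b-1)$ is nonpositive to absorb it harmlessly. That bookkeeping is where a careless calculation would lose a constant and miss the exact form of $\mu$.
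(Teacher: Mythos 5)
Your proof is correct and follows essentially the same route as the paper: invoke Lemma~\ref{lemPhiRDeltaCLoc} at $x=\rho+1$ together with the maximality of $\rho$ to obtain $M\le(r-1)(\rho+1)+\lceil(\rho+1)/(\delta-1)\rceil$, derive $\rho\ge\mu$, and conclude via Theorem~\ref{thmMinDis}. The only (cosmetic) difference is in handling the ceiling: the paper bounds $\lceil(\rho+1)/(\delta-1)\rceil\le\rho/(\delta-1)+1$ and clears denominators, whereas you write $\rho+1=a(\delta-1)+b$ with $1\le b\le\delta-1$ to evaluate the ceiling exactly and then absorb the $-(b-1)$ slack; both land on $(M-1)(\delta-1)+1\le\bigl[(r-1)(\delta-1)+1\bigr](\rho+1)$ and finish identically.
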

\begin{proof}
By Lemma \ref{lemPhiRDeltaCLoc} and the maximality of $\rho$,
\begin{eqnarray*}
M & \le & \Phi(\rho +1) - (\rho+1)\\
& \le & (r-1) (\rho+1) + \lceil \frac{\rho+1}{\delta-1} \rceil \\
& \le & (r-1)(\rho +1) + \frac{\rho}{\delta-1}+1.
\end{eqnarray*}
It follows that
$\rho \ge \lceil \frac{(M-1)(\delta -1)+1}{(r-1)(\delta-1)+1}\rceil -1.$
Then Theorem \ref{thmMinDis} gives the desired bound.
\end{proof}

\section{The Square Code}
For explicit code constructions, especially when the structure of regenerating sets is given, Theorem \ref{thmMinDis} can be used to give more precise characterization of the minimum distance.
For instance, in this section we utilize Theorem \ref{thmMinDis} to derive a tight bound on the minimum distance of the square code which was introduced in \cite{r_delta_c_2013} as a class of code with $(r,\delta)_c$-locality, a special case of the locality $r^{(\delta-1)}$.
Besides the property of repair tolerance $\delta=3$ for all coordinates, the square code also has the following advantage.
\begin{itemize}
\item[(1)~]High information rate. Some square codes have information rate close to $1$.
\item[(2)~]Desirable code distance. It was shown in \cite{r_delta_c_2013}, under the same level of local repair tolerance and information rate, the square code has the minimum distance beyond the upper bound for the $(r,\delta)$ locality defined in \cite{r_delta_Prakash2012}.
\end{itemize}

We first restate the square code as a linear $(n,(M,1),d)$ code over $\mathbb{F}_q$, where $n  = (r+1)^2$, $ r+1 \le M \le r^2$ and its generator matrix $G =(x_{i,j})_{1\le i,j\le r+1}$ is composed of $n$ column vectors $ x_{i,j}\in \mathbb{F}_q^M $ satisfying
\begin{equation}\label{eq_square_condition}
\begin{cases}
\sum_{i=1}^{r+1} x_{i,j} = 0, \text{ for } 1\le j \le r+1 \\
\sum_{j=1}^{r+1} x_{i,j} = 0, \text{ for } 1 \le i \le r+1.
\end{cases}
\end{equation}

There is a grid corresponding to $\{x_{i,j}\}_{1\le i,j \le r+1}$.
As in Fig. \nolinebreak \ref{fig_square_code}, the vector $x_{i,j}$ stands for the cross point of the $i$-th row and the $j$-th column in the grid.
The sum of all $r+1$ vectors in the same row (or the same column) is zero.
Then for the coordinate $(i,j)$ of $\mathcal{C}$ where $1 \le i,j \le r+1$,
$$ \rs{R}{i,j}{row} = \{(i,1),(i,2),\cdots,(i,r+1)\} \text{ and } \rs{R}{i,j}{col}= \{(1,j),(2,j),\cdots,(r+1,j)\}$$
are its two regenerating sets.
Thus $\mathcal{C}$ has $(r,\delta = 3)_c$-locality, and therefore has locality $r^{(2)}$.

\begin{figure}[htbp]
\centering
\begin{tikzpicture}[scale=0.68]
\filldraw [blue]
(0,0) circle (2pt)
(0,1) circle (2pt)
(1,0) circle (2pt)
(1,1) circle (2pt)
(0,4) circle (2pt)
(0,5) circle (2pt)
(1,4) circle (2pt)
(1,5) circle (2pt)
(4,0) circle (2pt)
(5,0) circle (2pt)
(4,1) circle (2pt)
(5,1) circle (2pt)
(4,4) circle (2pt)
(5,4) circle (2pt)
(4,5) circle (2pt)
(5,5) circle (2pt);

\draw
(0,0) -- (0,1.7)
(1,0) -- (1,1.7)
(4,0) -- (4,1.7)
(5,0) -- (5,1.7)
(0,3.3) -- (0,5)
(1,3.3) -- (1,5)
(4,3.3) -- (4,5)
(5,3.3) -- (5,5)
(0,0) -- (1.7,0)
(0,1) -- (1.7,1)
(0,4) -- (1.7,4)
(0,5) -- (1.7,5)
(3.3,0) -- (5,0)
(3.3,1) -- (5,1)
(3.3,4) -- (5,4)
(3.3,5) -- (5,5);
\draw [dashed]
(0,1.7) -- (0,3.3)
(1,1.7) -- (1,3.3)
(4,1.7) -- (4,3.3)
(5,1.7) -- (5,3.3)
(1.7,0) -- (3.3,0)
(1.7,1) -- (3.3,1)
(1.7,4) -- (3.3,4)
(1.7,5) -- (3.3,5);

\node [below left] at (0,0) {\small $x_{r+1,1}$};
\node [below] at (1,0) {\small $x_{r+1,2}$};
\node [below] at (4,0) {\small $x_{r+1,r}$};
\node [below right] at (5,0) {\small $x_{r+1,r+1}$};

\node [left] at (0,1) {\small $x_{r,1}$};
\node [above right] at (1,1) {\small $x_{r,2}$};
\node [above right] at (4,1) {\small $x_{r,r}$};
\node [right] at (5,1) {\small $x_{r,r+1}$};

\node [left] at (0,4) {\small $x_{2,1}$};
\node [above right] at (1,4) {\small $x_{2,2}$};
\node [above right] at (4,4) {\small $x_{2,r}$};
\node [right] at (5,4) {\small $x_{2,r+1}$};

\node [above left] at (0,5) {\small $x_{1,1}$};
\node [above] at (1,5) {\small $x_{1,2}$};
\node [above] at (4,5) {\small $x_{1,r}$};
\node [above right] at (5,5) {\small $x_{1,r+1}$};

\end{tikzpicture}
\caption{The grid corresponding to vectors $\{x_{i,j}\}_{1 \le i,j \le r+1}$.}
\label{fig_square_code}
\end{figure}
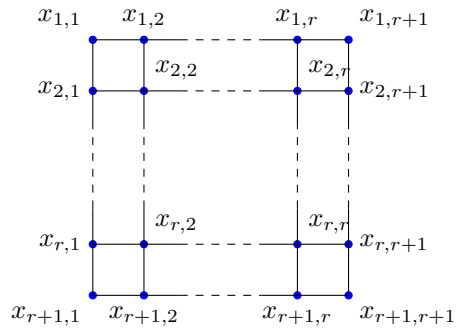

Next, we prove an upper bound on the minimum distance by using Theorem \ref{thmMinDis}.
\begin{theorem}\label{thmSquareUpperBound}
The minimum distance of a square code satisfies
$$ d \le n-M+1 - s,$$
where $s = \max \{ x | g(x)  <M\}$ and
\begin{equation*}
g(x) = \begin{cases}
x r - \frac{x^2}{4}, \text{ if } 2 \mid x \\
x r - \frac{x^2 -1}{4}, \text{ if } 2 \nmid x
\end{cases}
\end{equation*}
is a function defined over all integers $x$ in the range $[0,2r+1]$.
\end{theorem}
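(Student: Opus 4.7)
The plan is to reduce everything to Theorem \ref{thmMinDis} by lower-bounding $\rho$. Since the square code has $\alpha=1$, that theorem yields $d\le n-M+1-\rho$, so it suffices to prove $\rho\ge s$. By the definition of $\rho$, this in turn reduces to showing that $\Phi(x)-x\le g(x)$ for every $x$ in the relevant range: once this is established, $g(s)<M$ implies $\Phi(s)-s<M$, hence $\rho\ge s$.

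The heart of the argument is a combinatorial construction giving the upper bound $\Phi(x)\le x(r+1)-\lfloor x^2/4\rfloor$ for $0\le x\le 2r+1$. Given $x$, I will write $x=a+b$ with $a=\lfloor x/2\rfloor$ and $b=\lceil x/2\rceil$, then take the $a$ row-type regenerating sets corresponding to rows $1,\dots,a$ of the grid together with the $b$ column-type regenerating sets corresponding to columns $1,\dots,b$. Inclusion-exclusion on the $(r+1)\times(r+1)$ grid immediately gives a union of size $a(r+1)+b(r+1)-ab=x(r+1)-ab$, and choosing $a,b$ as stated maximizes $ab$ at $\lfloor x^2/4\rfloor$. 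Subtracting $x$ recovers exactly $g(x)$.

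What remains is to order these $a+b$ sets so that they form a \emph{nontrivial} union, and this is the main technical point. The construction I have in mind processes the rows first and then the columns: for $i=1,\dots,a$ use $(i,b+1)$ (or, in the degenerate case $b=r+1$, use $(i,1)$) as the pivot of the $i$-th row, which lies outside the union of the preceding rows; then for $j=1,\dots,b$ use $(a+1,j)$ as the pivot of the $j$-th column, which lies neither in any previously chosen row (since its row index exceeds $a$) nor in any previously chosen column (since its column index equals $j$). The constraints $a\le r$ and $b\le r+1$ needed for this to make sense follow from $x\le 2r+1$, and the boundary case $x=2r+1$, $a=r$, $b=r+1$ is precisely where one must check that the pivot $(a+1,j)=(r+1,j)$ still avoids the previous union — it does, because row $r+1$ was never added. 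Verifying this boundary case is the subtlest bookkeeping in the argument, but it is routine.

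With the bound $\Phi(x)\le x+g(x)$ in hand for all $0\le x\le 2r+1$, the definition of $\rho$ gives $\rho\ge s$ directly, and Theorem \ref{thmMinDis} closes the proof. The only genuine obstacle is the combinatorial construction of the nontrivial union in the regime where either $a$ or $b$ saturates $r+1$; everything else is inclusion-exclusion and an elementary optimization of $ab$ under the constraint $a+b=x$.
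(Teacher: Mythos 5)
Your proposal is correct and follows essentially the same strategy as the paper: both prove $\Phi(x)\le g(x)+x$ for $0\le x\le 2r+1$ by exhibiting a nontrivial union of $x$ row/column regenerating sets of that size, then conclude $\rho\ge s$ and invoke Theorem~\ref{thmMinDis}. The one cosmetic difference is the ordering: the paper interleaves rows and columns in the sequence $\rs{R}{1,r+1}{row}, \rs{R}{r+1,1}{col}, \rs{R}{2,r+1}{row}, \rs{R}{r+1,2}{col}, \dots$, so the pivots all sit in the last row or last column and no boundary case arises, whereas you add all $a=\lfloor x/2\rfloor$ rows first and then the $b=\lceil x/2\rceil$ columns, which forces the small case analysis for the row pivots when $b=r+1$ (a non-issue, as you note, since any element of a fresh row works before columns are introduced); both orderings are valid and yield the same union size $x(r+1)-ab=g(x)+x$ with $ab$ maximized at $\lfloor x^2/4\rfloor$.
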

\begin{IEEEproof}
First, we prove that $\Phi(x) \le g(x) +x$ for $0\le x \le 2r+1$.
In fact, observe that the $2r+1$ regenerating sets
$$ \rs{R}{1,r+1}{row}, \rs{R}{r+1,1}{col},\rs{R}{2,r+1}{row},\rs{R}{r+1,2}{col},\cdots,\rs{R}{r,r+1}{row},\rs{R}{r+1,r}{col},\rs{R}{r+1,r+1}{row} $$
have a nontrivial union (with respect to the order above).
Thus for $0 \le x \le 2r+1$, $\Phi(x)$ is no more than the size of the first $x$ regenerating sets' union which equals the function value $g(x)+x$.
Consequently, we have $\Phi(x) \le g(x)+x$.

Assume $s \ge \rho+1$, then by the definition of $\rho$ and the increasing property of $\Phi(x)$, it follows that
$$M \le \Phi(\rho+1) - (\rho+1) \le  \Phi(s) - s  \le  g(s),$$
which contradicts the definition of $s$. Therefore $\rho\geq s$ and Theorem \ref{thmMinDis} gives the desired bound.
\end{IEEEproof}

In particular, the square code has $(r,\delta)_c$-locality, so it also satisfies the upper bound in Theorem \ref{thmDisRDeltaC}. But a comparison shows the bound in Theorem \ref{thmSquareUpperBound} is more precise for the square code in general. As an example, Fig. \ref{fig_comparing} displays the two bounds for the square code with $r=5$.

\begin{figure}[htbp]
\centering
\includegraphics[width = 0.618 \textwidth]{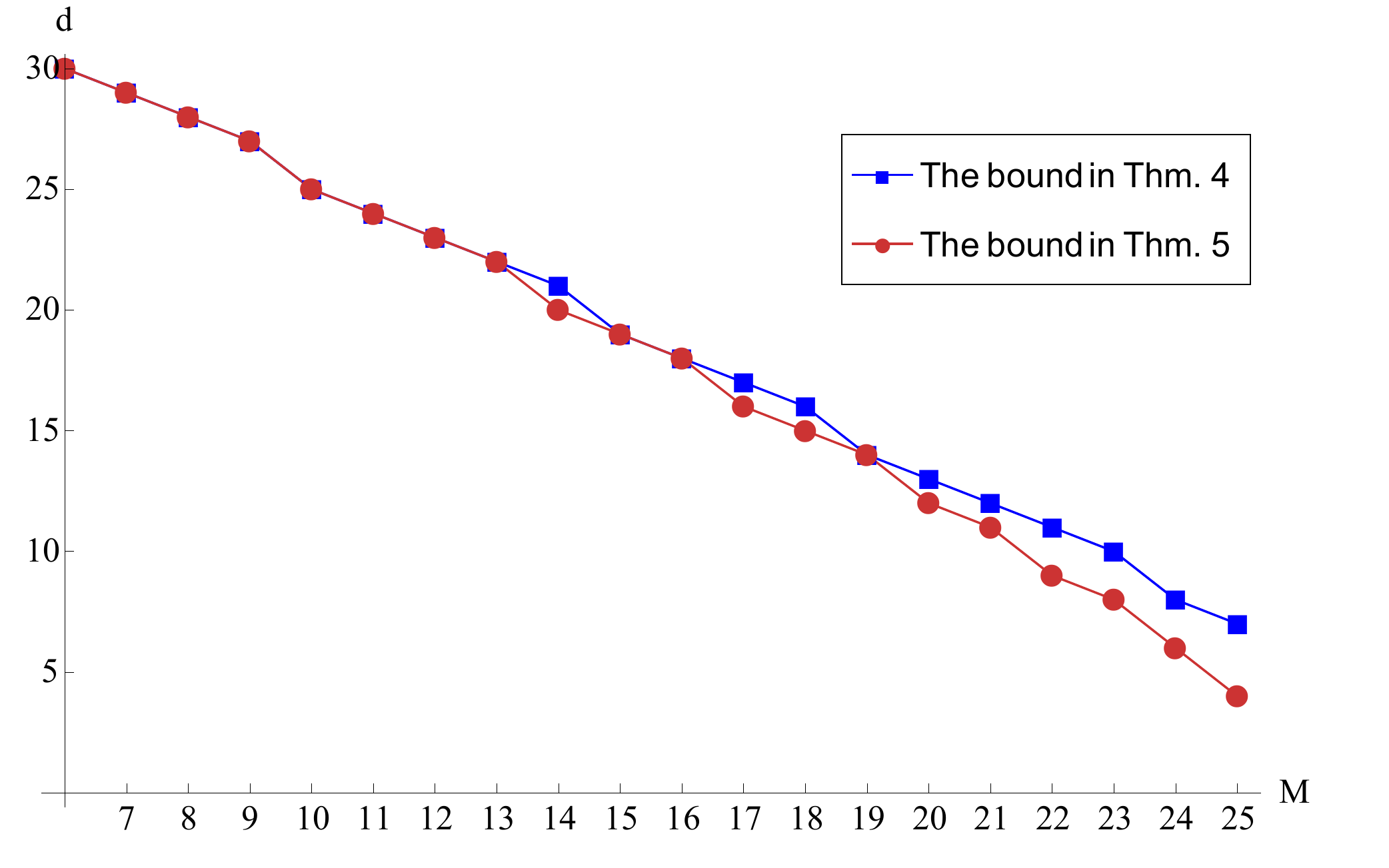}
\caption{The minimum distance upper bound for square codes with $r=5$.}
\label{fig_comparing}
\end{figure}

\subsection{Construction of square codes with optimal distance}
We present an explicit construction of the square code that has the minimum distance $d = n-M+1-s$, showing tightness of the upper bound given in Theorem \ref{thmSquareUpperBound}.

Let $\mathbb{F}_{q^m}$ be an extension field of $\mathbb{F}_q$, where $m \ge r^2$.
Note that $\mathbb{F}_{q^m}$ can be regarded as an $m$-dimensional linear space over $\mathbb{F}_q$.
Then there exist $r^2$ elements $\{\beta_{i,j}\}_{1 \le i,j \le r}$ in $\mathbb{F}_{q^m}$ that are linearly independent over $\mathbb{F}_q$.
Moreover, let
\begin{equation*}
\beta_{r+1,j} = - \sum_{i=1}^{r} \beta_{i,j} \text{ for } 1\le j \le r
\end{equation*}
and
\begin{equation*}
\beta_{i,r+1} = - \sum_{j=1}^{r} \beta_{i,j} \text{ for } 1\le i \le r+1.
\end{equation*}

Let $\mathcal{C}$ be an $(n,(M,1),d)$ linear code over $\mathbb{F}_{q^m}$ with generator matrix $G= (g_{i,j})_{1 \le i,j\le r+1}$, where $n = (r+1)^2,r+1 \le M \le r^2$ and
\begin{equation*}
g_{i,j} = \begin{pmatrix} \beta_{i,j} \\ \beta_{i,j}^q \\ \vdots \\ \beta_{i,j}^{q^{M-1}} \end{pmatrix}.
\end{equation*}
Then $\mathcal{C}$ is a square code of locality $r^{(2)}$ because
\begin{equation*}
\sum_{i=1}^{r+1} g_{i,j} = \begin{pmatrix} \sum_{i=1}^{r+1} \beta_{i,j} \\ \vdots \\ \sum_{i=1}^{r+1} \beta_{i,j}^{q^{M-1}} \end{pmatrix}
= \begin{pmatrix} \sum_{i=1}^{r+1} \beta_{i,j} \\ \vdots \\ (\sum_{i=1}^{r+1} \beta_{i,j})^{q^{M-1}} \end{pmatrix} = 0, \text{ for } 1 \le j \le r+1
\end{equation*}
and similarly, $ \sum_{j=1}^{r+1} g_{i,j} = 0 $ for $1 \le i \le r+1$.

Next we show the minimum distance of $\mathcal{C}$ satisfies $d \ge n -M+1 -s$.
Firstly, we quote a basic result of finite fields.
\begin{lemma}[\upshape \cite{finiteField}]\label{lem_finite_field}
Suppose $x_1,\cdots,x_M \in \mathbb{F}_{q^m}$ are linearly independent over $\mathbb{F}_q$, then
\begin{equation*}
\det
\begin{pmatrix}
x_1 & x_2 & \cdots & x_M \\
x_1^q & x_2^q & \cdots & x_M^q \\
\vdots & \vdots & \ddots & \vdots\\
x_1^{q^{M-1}} & x_2^{q^{M-1}} & \cdots & x_M^{q^{M-1}}
\end{pmatrix}
\neq 0.
\end{equation*}

\end{lemma}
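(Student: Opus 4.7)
The plan is to proceed by induction on $M$, exploiting the fact that treating the determinant as a polynomial in a single variable produces a so-called linearized (or $q$-) polynomial, which is automatically $\mathbb{F}_q$-linear. The base case $M=1$ is immediate: the matrix is $(x_1)$, and $x_1$ being $\mathbb{F}_q$-linearly independent means $x_1 \neq 0$.

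For the inductive step, assume the statement holds for $M-1$ and introduce
$$D(Y) \;=\; \det\begin{pmatrix} x_1 & \cdots & x_{M-1} & Y \\ x_1^q & \cdots & x_{M-1}^q & Y^q \\ \vdots & & \vdots & \vdots \\ x_1^{q^{M-1}} & \cdots & x_{M-1}^{q^{M-1}} & Y^{q^{M-1}} \end{pmatrix}.$$
Expanding along the last column yields $D(Y)=\sum_{k=0}^{M-1} a_k\, Y^{q^k}$, where each $a_k$ is (up to sign) an $(M{-}1)\times(M{-}1)$ Moore-type minor built from the $x_i$'s. In particular, only the exponents $q^0,q^1,\ldots,q^{M-1}$ appear, so $D$ is a $q$-polynomial. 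Since the Frobenius map is additive and fixes $\mathbb{F}_q$, this forces $D(au+bv)=aD(u)+bD(v)$ for all $a,b\in\mathbb{F}_q$ and $u,v\in\mathbb{F}_{q^m}$, i.e.\ $D$ is $\mathbb{F}_q$-linear.

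Next, I would combine three observations. First, $D(x_i)=0$ for $i=1,\dots,M-1$, because substituting $Y=x_i$ produces two identical columns. Second, by $\mathbb{F}_q$-linearity the zero set of $D$ is an $\mathbb{F}_q$-subspace of $\mathbb{F}_{q^m}$, so it contains the whole span $V=\mathrm{span}_{\mathbb{F}_q}(x_1,\ldots,x_{M-1})$; by the hypothesis on the $x_i$'s this span has dimension $M-1$ and thus cardinality $q^{M-1}$. Third, the leading coefficient $a_{M-1}$ equals $\pm$ the $(M-1)$-th Moore determinant on $x_1,\ldots,x_{M-1}$, which is nonzero by the induction hypothesis; hence $D(Y)$ has degree exactly $q^{M-1}$ as an ordinary polynomial, and so at most $q^{M-1}$ roots in $\mathbb{F}_{q^m}$. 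Combining, the root set of $D$ equals $V$ exactly.

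To conclude, the hypothesis that $x_1,\ldots,x_M$ are $\mathbb{F}_q$-linearly independent means $x_M\notin V$, so $D(x_M)\neq 0$, which is precisely the determinant in the statement. The main point requiring care, rather than a serious obstacle, is the structural check that the cofactor expansion really gives a $q$-polynomial whose top coefficient is the smaller Moore determinant; once this is verified, the argument is a clean pigeonhole between the degree bound $q^{M-1}$ and the $q^{M-1}$ forced roots supplied by $\mathbb{F}_q$-linearity.
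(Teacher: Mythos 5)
Your proof is correct. Note, however, that the paper does not prove this lemma at all: it is quoted as a known fact about Moore determinants from the cited finite-fields reference (Lidl--Niederreiter), so there is no in-paper argument to compare against. What you have reconstructed is essentially the standard textbook proof: induction on $M$, viewing the determinant as a linearized ($q$-)polynomial $D(Y)=\sum_{k=0}^{M-1}a_kY^{q^k}$ in the last column, observing that its root set is an $\mathbb{F}_q$-subspace containing $\mathrm{span}_{\mathbb{F}_q}(x_1,\dots,x_{M-1})$, and pinning it down to exactly that span of size $q^{M-1}$ via the degree bound, since the top coefficient $a_{M-1}$ is the $(M-1)\times(M-1)$ Moore determinant, nonzero by induction (indeed with sign $+1$, as it is the minor obtained by deleting the last row and last column). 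All the individual steps check out: the base case, the $\mathbb{F}_q$-linearity of $D$ from additivity of Frobenius and $a^{q}=a$ for $a\in\mathbb{F}_q$, the vanishing $D(x_i)=0$ from repeated columns, the root count, and the conclusion $D(x_M)\neq 0$ because $x_M$ lies outside the span. So your proposal supplies a self-contained justification where the paper simply defers to the literature; the only thing it does not give (and does not need for the paper's purposes) is the stronger classical product formula for the Moore determinant, of which the nonvanishing statement is an immediate corollary.
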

Let $S_1 , \cdots , S_{r+1}$ be a partition of $\{(i,j)\}_{1\le i ,j \le r+1}$, where $S_i = \{ (i,j)\}_{1\le j \le r+1}$ for $1 \le i \le r+1$.
\begin{lemma}\label{lem_suqare_code_generator_matrix}
Suppose $X$ is a subset of $\{(i,j)\}_{1\le i ,j \le r+1}$ such that
\begin{itemize}
\item[(1)] $\left| X \right| \ge M $ and $  \left| X \cap S_i \right| \le r$ for all $1 \le i \le r+1$.
\item[(2)] there exists $1\leq i_0\leq r+1$ such that $X \cap S_{i_0}= \emptyset$.
\end{itemize}
Then $\r{G|_X} = M$.
\end{lemma}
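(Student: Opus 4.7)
The plan is to prove $\text{rank}(G|_X) = M$ by exhibiting $M$ columns of $G|_X$ that are linearly independent over $\mathbb{F}_{q^m}$. The key tool is Lemma \ref{lem_finite_field}: a sub-collection of columns of $G$ is $\mathbb{F}_{q^m}$-linearly independent if and only if the underlying scalars $\{\beta_{i,j}\}$ are $\mathbb{F}_q$-linearly independent. So the problem reduces to a linear-algebra statement over $\mathbb{F}_q$: given $X$ satisfying (1)--(2), find $M$ elements of $\{\beta_{i,j} : (i,j) \in X\}$ that are $\mathbb{F}_q$-independent. I will aim for the stronger claim that the entire collection $\{\beta_{i,j} : (i,j) \in X\}$ is $\mathbb{F}_q$-linearly independent, which together with $|X| \ge M$ finishes the proof.

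For the linear-algebra step I would first exploit row symmetry. Both hypotheses on $X$ and the defining constraints $\sum_i \beta_{i,j} = 0$, $\sum_j \beta_{i,j} = 0$ are invariant under permutations of the row indices, so by relabeling I may assume $i_0 = r+1$, i.e., $X \subseteq [r] \times [r+1]$. This reduction is crucial: because no entry of $X$ lies in row $r+1$, every column-sum relation $\sum_{i=1}^{r+1}\beta_{i,j}=0$ involves a scalar outside $X$ and therefore cannot contribute to a dependence among the $\beta_{i,j}$ with $(i,j) \in X$. Only row-sum relations remain to be controlled.

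Next, I would substitute the row-sum identity $\beta_{i,r+1} = -\sum_{k=1}^{r}\beta_{i,k}$ (valid for every $i \in [r]$) into a hypothetical dependence $\sum_{(i,j) \in X} \mu_{i,j}\beta_{i,j} = 0$, re-expressing everything in the basis $\{\beta_{i,k} : 1 \le i,k \le r\}$. Setting the coefficient of each basis element to zero yields, for every $i,k \in [r]$, a simple scalar equation of the form $\mu_{i,k}\mathbf{1}_{(i,k) \in X} - \mu_{i,r+1}\mathbf{1}_{(i,r+1) \in X} = 0$. I would then argue row by row: if $(i,r+1) \notin X$, the interior coefficients vanish immediately; if $(i,r+1) \in X$, condition (1) supplies some missing column $k^* \in [r]$ with $(i,k^*) \notin X$, and the equation at $(i,k^*)$ forces $\mu_{i,r+1}=0$, after which all remaining $\mu_{i,k}$ for $(i,k) \in X$ vanish in turn.

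The main obstacle I anticipate is purely bookkeeping in the boundary case $(i,r+1) \in X$: I need the missing column guaranteed by $|X \cap S_i| \le r$ to lie in $[r]$, not at position $r+1$ itself. Since $(i,r+1)$ is already in $X$, the missing element of $S_i$ must be in $[r]$, which is exactly what the argument needs. This is where conditions (1) and (2) play complementary roles---condition (2) kills every column-sum relation, and condition (1) decouples $\mu_{i,r+1}$ from the interior coefficients within each row. Once these two points are checked, the rest of the proof is routine.
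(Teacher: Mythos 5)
Your proposal is correct, but it takes a genuinely more elementary route than the paper. Both proofs reduce to showing $\mathbb{F}_q$-linear independence of $\{\beta_{i,j}:(i,j)\in X'\}$ via Lemma~\ref{lem_finite_field}. The paper then argues structurally: it introduces the row-spaces $V_i=\s{\{\beta_{i,j}\}_{j}}$, observes that $V=\bigoplus_{i\ne i_0}V_i$ (each $V_i$ has dimension $r$, any $r$ of them span the $r^2$-dimensional $V$), and notes that condition (1) makes $\{\beta_{i,j}\}_{(i,j)\in X\cap S_i}$ independent inside each $V_i$; the direct sum then gives global independence in one line. You instead unpack this into coordinates: write a hypothetical dependence $\sum\mu_{i,j}\beta_{i,j}=0$, substitute $\beta_{i,r+1}=-\sum_{k\le r}\beta_{i,k}$, and read off the coefficient equations $\mu_{i,k}\mathbf{1}_{(i,k)\in X}-\mu_{i,r+1}\mathbf{1}_{(i,r+1)\in X}=0$, killing the $\mu$'s row by row using the column guaranteed missing by $|X\cap S_i|\le r$. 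Your version is more hands-on and in fact proves the slightly stronger statement that all of $\{\beta_{i,j}:(i,j)\in X\}$ (not just an $M$-subset) is independent; the paper's version is shorter once the direct-sum observation is in place and makes the role of condition (2) more conceptually transparent.

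One point you should make explicit: when you relabel so that $i_0=r+1$ and then expand ``in the basis $\{\beta_{i,k}\}_{1\le i,k\le r}$,'' you are using that for any choice of $i_0$, the $r^2$ elements $\{\beta_{i,k}: i\ne i_0,\ k\in[r]\}$ are $\mathbb{F}_q$-independent. This is not literally given (only $\{\beta_{i,j}\}_{1\le i,j\le r}$ is specified as independent in the construction), but it follows in one line: each $\beta_{i_0,k}$ is $-\sum_{i\ne i_0}\beta_{i,k}$ by the column-sum relation, so $\{\beta_{i,k}:i\ne i_0,k\in[r]\}$ spans the same $r^2$-dimensional space as $\{\beta_{i,k}:i,k\in[r]\}$, and having exactly $r^2$ elements it must be a basis. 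Adding this sentence closes the only loose thread; the rest of the coefficient bookkeeping is sound.
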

\begin{IEEEproof}
The proof is based on Lemma \ref{lem_finite_field}.
Let $X^\prime$ be a subset of $X$ with size $M$.
It is clear $X^\prime$ also satisfies the condition (1) and (2) in Lemma \ref{lem_suqare_code_generator_matrix}.
Next, we prove that $\r{G|_{X^\prime}} = M$.
By Lemma \ref{lem_finite_field}, it suffices to show the $M$ elements $\{\beta_{i,j}\}_{(i,j) \in X^\prime}$ are linearly independent over $\mathbb{F}_q$.

For $1 \le i \le r+1$, let $V_i$ be the linear space spanned by $\{\beta_{i,j}\}_{(i,j) \in S_i}$ over $\mathbb{F}_q$ and let $V$ be the space spanned by $\{\beta_{i,j}\}_{1 \le i , j \le r+1}$ over $\mathbb{F}_q$.
Because of the choice of $\beta_{i,j}$, the sum of any $r$ out of the $r+1$ spaces $\{V_i\}_{1 \le i \le r+1}$ is equal to $V$.
Note that $\dim(V) = r^2$ and $\dim(V_i) = r$ for $1\le i \le r+1$.
It follows that $V$ is the direct sum of any $r$ subspaces out of $\{V_i\}_{1 \le i \le r+1}$.
Particularly,
\begin{equation}\label{eq_direct_sum}
V = \bigoplus_{\substack{i=1 \\ i \neq i_0}}^{r+1} V_i.
\end{equation}

On the other hand, for $1 \le i \neq i_0 \le r+1$, the condition $ |X\cap S_i|\leq r$ implies that $\{\beta_{i,j}\}_{(i,j)\in X \cap S_i}$ are linearly independent over $\mathbb{F}_q$.
Therefore $\{\beta_{i,j}\}_{(i,j) \in X^\prime}$ are linearly independent over $\mathbb{F}_q$.
\end{IEEEproof}

\begin{theorem}\label{thm_square_code_construction}
Let $\mathcal{C}$ be the square code defined by the generator matrix $G= (g_{i,j})_{1 \le i,j\le r+1}$.
Then
$$d \ge n-M+1-s.$$
\end{theorem}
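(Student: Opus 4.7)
The plan is to show $\text{rank}(G|_Y) = M$ for every $Y \subseteq \{(i,j)\}_{1 \le i,j \le r+1}$ with $|Y| \ge M + s$, which is equivalent to $d \ge n - M + 1 - s$. By Lemma \ref{lem_finite_field}, it suffices to produce $M$ indices in $Y$ at which the $\beta_{i,j}$'s are $\mathbb{F}_q$-linearly independent; equivalently, writing
\[
\sigma(Y) := \dim_{\mathbb{F}_q} \text{span}\{\beta_{i,j} : (i,j) \in Y\},
\]
it suffices to show $\sigma(Y) \ge M$.

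The first step is a graph-theoretic formula for $\sigma(Y)$. I would identify the grid with the edge set of the complete bipartite graph $K_{r+1,r+1}$ (row indices and column indices as the two vertex classes, element $(i,j)$ as the edge joining row $i$ to column $j$), set $H_Y := K_{r+1,r+1} \setminus Y$, and let $D(Y)$ be its number of connected components. The $\mathbb{F}_q$-kernel of the map $e_{(i,j)} \mapsto \beta_{i,j}$ is spanned by the row relations $\sum_j \beta_{i,j} = 0$ and column relations $\sum_i \beta_{i,j} = 0$, and a combination $\sum_i a_i(\text{row } i) + \sum_j b_j(\text{col } j)$ has $(i,j)$-coordinate $a_i + b_j$. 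Such a combination is supported inside $Y$ exactly when $a_i + b_j = 0$ on every edge of $H_Y$; a standard two-coloring count on the bipartite components of $H_Y$ gives a $D(Y)$-dimensional solution space, and quotienting by the one-dimensional trivial shift $(a,b) \mapsto (a + c\vec{1}, b - c\vec{1})$ yields
\[
\sigma(Y) = |Y| - D(Y) + 1.
\]

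The second step is a matching lower bound on $|Y|$ in terms of $D(Y)$. Writing $|Y| = (r+1)^2 - |H_Y|$, I would maximize the number of edges in a bipartite graph on $(r+1)+(r+1)$ vertices with $D$ connected components. A rearrangement argument — using $(a_1+a_2)(b_1+b_2) > a_1 b_1 + a_2 b_2$ whenever all parts are positive — shows the optimum places all edges into a single complete bipartite component together with $D-1$ isolated vertices, and the optimal split of those singletons between rows and columns gives $(r+1 - \lfloor (D-1)/2 \rfloor)(r+1 - \lceil (D-1)/2 \rceil)$ edges. Subtracting from $(r+1)^2$ and simplifying by parity cases identifies the resulting lower bound as exactly $g(D-1) + (D-1)$, so
\[
|Y| \ge g(D(Y) - 1) + D(Y) - 1.
\]

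The two bounds combine immediately. If $\sigma(Y) < M$, then $g(D(Y)-1) \le \sigma(Y) < M$; by the definition of $s$ and the monotonicity of $g$ on $[0, 2r+1]$ this forces $D(Y) - 1 \le s$, hence $|Y| < M + D(Y) - 1 \le M + s$, contradicting $|Y| \ge M + s$. Thus $\sigma(Y) \ge M$, some $M$-subset of $\{\beta_{i,j}\}_{(i,j) \in Y}$ is $\mathbb{F}_q$-linearly independent, and Lemma \ref{lem_finite_field} gives $\text{rank}(G|_Y) = M$. The main obstacle I anticipate is establishing the rank formula $\sigma(Y) = |Y| - D(Y) + 1$: the two-coloring dimension count and the quotient by the trivial shift need careful justification. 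The maximum-edge calculation is more mechanical once one verifies that merging two non-singleton components and adding back a singleton strictly increases the edge count.
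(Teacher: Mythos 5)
Your proposal is correct, and it takes a genuinely different route from the paper's proof. The paper argues by contradiction: given a bad set $N$ of size $M+s$ with deficient rank, it partitions $N$ by rows, lets $a$ be the smallest row-intersection size and $b$ the number of full rows, shows $|N| \ge (r+1)(a+b)-ab$ and then (using the shape of $g$) that $s\ge a+b$; it then deletes the smallest row and one point from each full row to obtain a subset $\tilde N$ of size $\ge M$ that misses a whole row and meets every other row in at most $r$ points, so Lemma~\ref{lem_suqare_code_generator_matrix} gives $\r{G|_{\tilde N}}=M$, a contradiction. You instead compute the $\mathbb{F}_q$-rank of $\{\beta_{i,j}\}_{(i,j)\in Y}$ exactly: identifying the grid with the edges of $K_{r+1,r+1}$ and setting $H_Y=K_{r+1,r+1}\setminus Y$ with $D(Y)$ connected components, you show $\sigma(Y)=|Y|-D(Y)+1$ (the kernel of $e_{(i,j)}\mapsto\beta_{i,j}$ is exactly the $(2r+1)$-dimensional span of the row and column relations — dimension count matches $(r+1)^2 - r^2$ — and the relations supported in $Y$ form a space of dimension $D(Y)-1$ after quotienting the trivial shift). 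Then you bound $|Y|\ge g(D(Y)-1)+D(Y)-1$ via an extremal computation and close the argument. This is stronger than what the paper proves — you get an exact rank formula for every subset $Y$, not just the threshold — and it dispenses with Lemma~\ref{lem_suqare_code_generator_matrix} entirely. The one spot to tighten is the rearrangement step: since merging two components reduces the component count, you must simultaneously split off a singleton to keep $D$ fixed, and the correct inequality is $(a_1+a_2-1)(b_1+b_2)\ge a_1b_1+a_2b_2$ (equivalently $b_1(a_2-1)+b_2(a_1-1)\ge 0$), which is non-strict rather than the strict $(a_1+a_2)(b_1+b_2)>a_1b_1+a_2b_2$ you quote; non-strictness is enough, since you only need the canonical configuration to realize the maximum. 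With that small repair, the proof is complete.
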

\begin{IEEEproof}
Assume on the contrary that $d \le n - M -s$.
Then there is a subset $N \subseteq \{(i,j)\}_{1\le i ,j \le r+1}$ such that $| N| = M+s$ and $\r{G|_N} < M$.

Let $N \cap S_i = N_i$, then $N = N_1 \cup \cdots \cup N_{r+1}$ is a partition of $N$.
Suppose that $a = \min \{\left| N_i \right| : 1\le i \le r+1\}$ and $b$ is the number of $N_i$'s such that $\left| N_i \right| = r+1$.
Then clearly $a+b < 2r+1$ and
\begin{eqnarray}\label{eq_abrMs}
\left| N \right| & = & M+s \nonumber \\
& = & \left| N_1 \right| + \cdots + \left| N_{r+1} \right|\nonumber \\
& \ge & (r+1)b + (r+1 -b)a \nonumber \\
& = & (r+1)(a+b) - ab.
\end{eqnarray}

We claim that $s \ge a+b$ because otherwise, it has $s+1 \le a+b < 2r+1$ which leads to
\begin{eqnarray*}
g(s+1) & \le & g(a+b) \\
& \le & r(a+b) - ab \\
& \le & M+s - (a+b) \\
& \le & M-1,
\end{eqnarray*}
where the first inequality is due to increasing property of the function $g(x)$, the second comes from the fact that
\begin{equation*}
ab \le \begin{cases} \frac{(a+b)^2}{4}, \text{ if } a+b \text{ even } \\ \frac{(a+b)^2-1}{4},\text{ if } a+b \text{ odd}, \end{cases}
\end{equation*}
the third is from (\ref{eq_abrMs}) and the last is from the assumption $s+1\leq a+b$.
But $g(s+1)\leq M-1$ contradicts the definition of $s$.
Therefore, $s\ge a+b$.

Suppose the $b$ sets of size $r+1$ are $ N_{i_1} ,\cdots ,  N_{i_b} $ and $N_{i_0} $ is a set of size $a$.
Then delete one element in each of $ N_{i_1}, \cdots, N_{i_b} $ and further delete $N_{i_0}$ from $N$, we get a subset of $N$, denoted as $\tilde{N}$.
It is clear that $| \tilde{N} \cap S_i | \le r$ for $1\le i \le r+1$ and $\tilde{N} \cap S_{i_0}= \emptyset$.
Additionally, $| \tilde{N} | = \left| N \right| -(a+b) \ge M$ because $|N|=M+s$ and $s\geq a+b$.
By Lemma \ref{lem_suqare_code_generator_matrix}, $\r{G|_N} \ge \r{G|_{\tilde{N}}} =M$, which contradicts the choice of $N$.

\end{IEEEproof}

\section{Conclusion}
We introduce the regenerating set which can be used to characterize the local repair groups of any locally repairable codes.
A connection between the regenerating set and the minimum distance is established.
Under the framework of regenerating sets, we derive more general definition, more uniform and brief proofs, and more precise bound. This framework are expected to provide deeper insight into the design of locally repairable codes.

\end{document}